\newtheorem{theorem}{Theorem}
\newtheorem{corollary}[theorem]{Corollary}
\newtheorem{definition}[theorem]{Definition}
\newtheorem{lemma}[theorem]{Lemma}
\newtheorem{assumption}{Assumption}
\newcommand{\by}{{\bf y}}
\title{ A deconvolution path for mixtures}
\author{Oscar-Hernan Madrid-Padilla\footnote{Ph.D student; Department of Statistics and Data Sciences, University of Texas at Austin; oscar.madrid@utexas.edu} \\
	Nicholas G.~Polson\footnote{Professor; University of Chicago Booth School of Business; ngp@chicagobooth.edu} \\
	James Scott\footnote{Associate Professor; Department of Information, Risk, and Operations Management and Department of Statistics and Data Sciences, University of Texas at Austin; james.scott@mccombs.utexas.edu}}
\date{\today}
\begin{document}

\maketitle

\begin{abstract}
	We propose a class of estimators for deconvolution in mixture models based on a simple two-step ``bin-and-smooth'' procedure applied to histogram counts.  The method is both statistically and computationally efficient: by exploiting recent advances in convex optimization, we are able to provide a full deconvolution path that shows the estimate for the mixing distribution across a range of plausible degrees of smoothness, at far less cost than a full Bayesian analysis.  This enables practitioners to conduct a sensitivity analysis with minimal effort. This is especially important for applied data analysis, given the ill-posed nature of the deconvolution problem.  Our results establish the favorable theoretical properties of our estimator and show that it offers state-of-the-art performance when compared to benchmark methods across a range of scenarios.
	
	\bigskip
	
	\noindent Key words: deconvolution, mixture models, penalized likelihood, empirical Bayes, sensitivity analysis
\end{abstract}

\newpage

\section{Deconvolution in mixture models}

\subsection{Introduction}
Suppose that we observe $\by = \left(y_1,\ldots, y_n\right)$ from the model
\begin{equation}
	\label{eqn:normal_means_model}
	y_i \mid \mu_i \sim \phi(y_i \mid \mu_i) \, , \quad  \mu_i \stackrel{i.i.d.}{\sim} f_0 \, ,
\end{equation}
where $\phi(\cdot \mid \mu)$ is a known distribution with location parameter $\mu$, and $f_0$ is an unknown mixing distribution.  Marginally, we have specified a mixture model for $y_i$:
\begin{equation}
	\label{eqn:y_marginal}
	m(y_i) = \int_\mathbb{R} \phi(y_i - \mu_i) \ f_0(\mu_i) \ d \mu_i = (\phi * f_0)(y_i) \, .
\end{equation}
The problem of estimating the mixing distribution $f_0$ is commonly referred to as \textit{deconvolution}: we observe draws from the convolution $m = \phi * f_0$, rather than from $f_0$ directly, and we wish to invert this blur operation to recover the distribution of the latent location parameters.  Models of this form have been used in a wide variety of applications and have attracted significant attention in the literature \citep[e.g.][]{kiefer:wolfowitz:1956,ferguson1973,fan1991optimal,newton2002nonparametric,ghosal2001entropies}.  Yet the estimation of $f_0$ continues to pose both theoretical and practical challenges, making it an active area of statistical research \citep[e.g.][]{delaigle2014parametrically,efron2016empirical}.

In this paper, we propose a nonparametric method for deconvolution that is both statistically and computationally efficient.  Our method can be motivated in terms of an underlying Bayesian model incorporating a prior into model (\ref{eqn:normal_means_model}), but it does not involve full Bayes analysis.  Rather, we use a two-step ``bin and smooth'' procedure.   In the ``bin'' step, we form a histogram of the sample, yielding the number of observations $x_j$ that fall into the $j$th histogram bin.  In the ``smooth'' step, we use the counts $x_j$ to compute a maximum \textit{a posteriori} (MAP) estimate of $f_0$ under a prior that encourages smoothness.


We show that this nonparametric empirical-Bayes procedure yields excellent performance for deconvolution, at reduced computational cost compared to full nonparametric Bayesian methods.  Our main theorems establish conditions under which the method yields a consistent estimate of the mixing distribution $f_0$, and provide a concentration bound for recovery of the marginal distribution $m$.  We also provide simulation evidence that the method offers practical improvements over existing state-of-the-art methods.

\section{Connections with previous work}


We first recall the work by \cite{kiefer:wolfowitz:1956}, who consider estimating $f_0$ using the nonparametric maximum likelihood estimation. The Kiefer--Wolfowitz estimator (KW) has some appealing features: it is completely nonparametric and invariant to translations of the data, it requires no tuning parameters, and it is consistent under fairly general conditions.  Balanced against these desirable features there is, perhaps, a  disadvantage when estimating a smooth  true mixing density: the  KW estimator is a discrete distribution involving as many as $n+1$ point masses.


Our approach  to deconvolution falls within a penalized likelihood framework. It has important connections with classical ideas of penalized likelihood density estimation from \cite{good:gaskins:1971,silverman1982estimation}. And at least in  spirit, our work resembles recent deconvolution methods by \cite{lee2013deconvolution,wager2013geometric,efron2016empirical}.  A  detailed discussion or comparison between these methods and ours  will be made in Section \ref{sec:penalized_likeilhood_approach}.

Deconvolution has also been studied using Bayesian methods. In the context of repeated measurements, or multivariate deconvolution, we highlight recent work by  \cite{sarkar2014bayesian2,sarkar2014bayesian,staudenmayer2008density}. Moreover, for the one dimensional density estimation problem, a flexible choice is the Dirichlet Process (DP) studied in \cite{ferguson1973} and \citet{escobarwest1995}. For a Dirichlet prior in deconvolution problems, concentration rates were recently studied in \cite{donnet2014posterior}. Related models were considered by \cite{domuller2005} and \cite{muralidharan2010empirical} for finite mixture of normals.

The DP provides a very general framework for estimating the mixing density $f_0$.  However, as \cite{martin:tokdar:2012} argue, fitting a Dirichlet process mixture does not scale well with the number of observations $n$. For micro--array  studies, $n$ ranges from thousands to tens of thousands, whereas for more recent studies of fMRI data or single-nucleotide polymorphisms, $n$ can reach several hundreds of thousands \citep[e.g.][]{tansey:etal:2014}. For such large data sets, fitting a DP mixture model can be very time-consuming.

To overcome this difficulty, \cite{newton2002nonparametric}, \cite{tokdar2009consistency}, \cite{martin2011semiparametric}, and \cite{martin:tokdar:2012} studied a predictive recursive (PR) algorithm. The resulting estimator scales well with large data sets while remaining reasonably accurate, thereby solving one the main challenges faced by the fully Bayesian approach.  

Finally, we note the work by \cite{carroll1988optimal,stefanski1990deconvolving,zhang1990fourier,fan1991optimal,fan2002wavelet,hall2007ridge,carroll2012deconvolution,delaigle2014parametrically}, among others,  who considered kernel estimators.  
Their idea is motivated by (\ref{eqn:normal_means_model}) after taking the Fourier transform of the corresponding convolution of densities,  then solving for the unknown mixing density using kernel approximations for the Fourier transform of the true marginal density. The resulting kernel estimator enjoys attractive theoretical properties: for each $\mu_0 \in \mathcal{R}$, the estimator  has optimal rates of convergence towards $f_0(\mu_0)$ for squared-error loss when the function $f_0$ belongs to a smooth class of functions \citep{fan1991optimal}. 

\section{A deconvolution path}
\label{our_approach}
\subsection{Overview of approach}
We now described our proposed approach in detail.  We study deconvolution estimators related to the variational problem
\begin{equation}
\label{eqn:basic_variational_problem}
 \underset{f}{\text{minimize}}\,\,\,-  \sum_{i=1}^{n} \log (\phi*f)(y_i) \,\,\,\,\,\,\,\, \text{subject to}\,\,\,\,\,\int_{\mathbb{R}} f(\mu)\,d\mu = 1, \,\,\,\,\, J(f) \leq t,
\end{equation}
where $J(f)$ is a known penalty functional. The choices of  $J$  we consider include $\ell_1$ or $\ell_2$ penalties on the derivatives in the log-space to encourage smoothness:
\begin{eqnarray}
\| \log f ^{(k)} \|_s^q   &=& \int_{\mathbb{R}}  \vert \log f ^{(k)}(\mu) \vert^s \ d \mu,
\end{eqnarray}
with $s=q = 1$ or $s= q = 2$ and where $\log f ^{(k)}$ is the derivative of order $k$ of the log prior.  The penalty involving the first derivative is an especially interpretable one, as $ d \log f(\mu)/d\mu = f'(\mu)/f(\mu)$ is the score function of the mixing density.

Note that an alternative interpretation of our approach is as a MAP estimator. To see this  we consider the (possibly improper) prior on the mixing density
$$
p(f)  \propto \exp \left ( - J(f) \right ) \mathbb{I} \left ( f \in \mathcal{A} \right ),
$$
where  $\mathcal{A} $ is an appropriate class of density functions. The posterior distribution is $ p( f \mid y ) \propto p( y \mid f) p( f) $, and our MAP estimator therefore solves, for an appropriate $\tau >0$,
\begin{equation}
\label{MAP_problem}
\text{argmin}_{ f \in \mathcal{A} } - \log p( y \mid f ) \,+\, \frac{\tau}{2} J (f) \, .
\end{equation}
 This belongs to  a general class of MAP estimators that have been studied in  \cite{good:gaskins:1971} and \cite{silverman1982estimation} for the classical problem  of density estimation. For deconvolution problems we note the recent work by \cite{wager2013geometric} which penalizes the marginal density rather than the derivatives of the mixing density as we propose. Moreover such penalization is not motivated to encourage smoothness. Rather,  it is  an $\ell_2$ projection on to the space of acceptable marginal densities.  An alternative penalized likelihood method was studied in \cite{lee2013deconvolution} in the different context  where the marginal density  has atoms. There the authors use the roughness penalty $J(f) = \int \vert f^{\prime}(\mu)\vert^2d\mu$ which differs from our approach that penalizes the log-mixing density and hence ensures that the solutions will be positive. Also, we allow different degrees of smoothness depending on the choice of $k$. Moreover, while  \cite{lee2013deconvolution} only considered sample sizes in the order of hundreds, we show in the next sections that our estimator can scale to much larger data sets while still enjoying attractive statistical properties.

More recently, \cite{efron2016empirical} proposed a penalized approach to deconvolution that is also based on regularization but differs from ours. Such approach proceeds by assuming that the mixing density is discrete with support $\{\theta_1,\ldots, \theta_{N}\}$. Then \cite{efron2016empirical} specified a parametric model on the mixing density of the form
\[
  f(\theta_j)\,\,=\,\,  \exp\left(Q_{j,\cdot}^T\alpha - c(\alpha)\right),\,\,\,\,j \,\,=\,\, 1,\ldots, N,
\]
where $Q_{j,\cdot}$ is the $j-th$ row of the matrix $Q \in  \mathbb{R}^{N \times p}$, for some $p>0$, which is used to encourage structure on the mixing density. Moreover, $c(\alpha)$ is a normalizing constant satisfying 
\[
    c(\alpha)\,\,=\,\,  \log \left(\sum_{j = 1}^N\, \exp\left(  Q_{j,\cdot}^T\,\alpha\right)  \right).  
\]
Then summing over all the $\{\theta_j\}_{j=1}^N$  and considering the contribution of the different samples, \cite{efron2016empirical} arrives to the minus log-likelihood 
\[
     l(\alpha)\,\, = \,\, -\sum_{i=1}^{n}\,\log\left(  \sum_{j=1}^{N}\, \text{N}(y_i\, |\, \theta_j,1 )\,\exp\left(Q_{j,\cdot}^T\alpha - c(\alpha)\right)   \right).
\]
The estimator from \cite{efron2016empirical}  results from solving 
\begin{equation}
\label{efron_decon}
\underset{\alpha}{\text{minimize}} \,\, \,\,l(\alpha) \,+\, c_0\,\left( \sum_{h=1}^{p} \alpha_h^2 \right)^{1/2}, 
\end{equation}
where $c_0$ is either $1$ or $2$.

We note that the estimator (\ref{efron_decon})  is  possibly limited by the following aspects. First, the choice of the matrix $Q$, which \cite{efron2016empirical}  recommends to be  a spline basis representation, might produce estimates that suffer from local adaptivity problems. Moreover, there is no theoretical support of choosing $c_0 \in \{1,2\}$. In our experiments section we will present experimental comparisons between the estimator (\ref{efron_decon})  and our approach.

Finally, we emphasize that our approach is not, in any sense, related to the ridge parameter deconvolution estimator from 
\cite{hall2007ridge}. Such estimator  does not penalizes that log-likelihood as  we propose, but rather is designed to avoid the need to choose a kernel function, in the original kernel estimator from  \cite{fan1991optimal},  by ridging the integral in its definition with a positive function.

\subsection{Binned counts problem}

Throughout this section we assume that $\phi $ corresponds to the pdf of the standard normal distribution, although the arguments can easily  be generalized to other distributions. 

To make estimation efficient in scenarios with thousands or even millions of observations, we actually fit a MAP estimator based on binning the data.  First, we use the samples to form a histogram $\{I_j , x_j \}_{j=1}^D$ with $D$ bins, where $I_j$ is the $j-$th interval in the histogram and $x_j = \# \{y_i \in I_j\}$ is the associated count.  For ease of exposition, we assume that the intervals take the form $I_j = \xi_j \pm \Delta/2$, i.e.~have midpoints $\xi_j$ and width $\Delta$, although this is not essential to our analysis.  To arrive to a discrete estimator, instead of Problem (\ref{eqn:basic_variational_problem}), we consider an approximation, a reparametrization $g = \log f$, and put the penalty in the objective function with a regulariation parameter $\tau >0$,
\begin{equation}
\label{weighted_likelihood}
\begin{array}{ll}
\underset{ g \in \mathbb{R}^D }{ \text{minimize} } &    -\frac{1}{n} \sum_{j=1}^D x_j \log\left(\, \phi*e^g(\xi_j)   \right) + \frac{\tau}{2}\,J(e^g)  \,\,\,\,\,\,\,\text{subject to} \,\,\,\, \int e^{g(\mu)} d\mu   = 1. \\
\end{array}
\end{equation}
We then approximate (\ref{weighted_likelihood}) by solving
\begin{equation}
\label{weighted_likelihood2}
\begin{array}{ll}
\underset{ g \in \mathbb{R}^{D} }{ \text{minimize} } &    -\frac{1}{n} \sum_{j=1}^D x_j \log\left(\, \sum_{i=1}^D \Delta\,\phi(\xi_j - \xi_i)e^{g_i}    \right) +  \frac{\tau}{2}\,\| \Delta^{(k+1)}g \|_q^s \\
\text{subject to} & \sum_{i=1}^{D_n} \Delta \,e^{g_i}   = 1, \\
\end{array}
\end{equation}
where $s=q=1$ or $s=q=2$, and $\Delta^{(k+1)}$ is the $k$-th order discrete difference operator. Concretely, when $k=0$, $\Delta^{(1)}$ is the $(D-1) \times D$ matrix encoding the first differences of adjacent values:
\begin{equation}
\label{eqn:fusedlassoD}
\Delta^{(1)} =\left(\begin{array}{rrrrrr}
1 & -1 & 0 & 0 & \mathbf{\cdots} & 0\\
0 & 1 & -1 & 0 & \cdots & 0\\
\vdots &  &  &  & \ddots & \vdots \\
0 & \cdots &  & 0 & 1 & -1
\end{array}\right).
\end{equation}
For $k \geq 1$, $\Delta^{(k+1)}$ is defined recursively as $\Delta^{(k+1)} = \Delta^{(1)} \Delta^{(k)}$, where $\Delta^{(1)}$ from (\ref{eqn:fusedlassoD}) is of the appropriate dimension.  Thus when $k=0$, we penalize the total variation of the vector $\theta$ \citep[c.f.][]{rudin:osher:faterni:1992,tibs:fusedlasso:2005} and should expect estimates that are shrunk towards  piecewise-constant functions.  When $k \geq 1$, the estimator penalizes higher-order versions of total variation, similar to the polynomial trend-filtering estimators studied by \citet{tibs:2014a}.

Interestingly, following the proof of Theorem 1 in \cite{padilla2015nonparametric} we find that (\ref{weighted_likelihood2}) is equivalent to
\begin{equation}
\label{problem_for_discrete_estimator}
\begin{aligned}
& \underset{\theta \in \mathcal{R}^D}{\text{minimize}}
& &
l(\theta) + \frac{\tau}{2} \Vert \Delta^{(k+1)} \theta \Vert_q^s \, ,
\end{aligned}
\end{equation}
where
\[
l(\theta) = \sum_{j=1}^D \left\{   \lambda_j(\theta) - x_j \log \lambda_j(\theta)
\right\} \, ,
\]
with $	\lambda_j =  \sum_{j=1}^D G_{ij} e^{\theta_i} $, $G_{ij} = \Delta\,\phi(\xi_j - \xi_i)$, and $\hat{\theta}$ solves (\ref{problem_for_discrete_estimator})  if only if $ \hat{\theta } - \log(n\,\Delta)\boldsymbol{1}$ solves (\ref{weighted_likelihood2}). Hence, in practice we solve the unconstrained optimization Problem (\ref{problem_for_discrete_estimator}).

\subsection{Solution algorithms}

We now discuss the implementation details for solving (\ref{problem_for_discrete_estimator}) in the case $s = q = 1$. To solve this problem, motivated by the work on trend filtering for regression by \cite{ramdas:tibs:2014}, we rewrite the problem as
\begin{equation}
\label{eqn:constrained_admm_version}
\begin{array}{ll}
\underset{\theta }{ \text{minimize}}   & l(\theta) + \frac{\tau}{2} \Vert \Delta^{(1)} \alpha \Vert_1 \,\,\,\,\,\,\text{subject to } \,\,\,\,\,   \alpha = \Delta^{(k)} \theta. 
\end{array}
\end{equation}

Next we proceed via the alternating-direction method of multipliers (ADMM), as in  \citet{ramdas:tibs:2014}.  \citep[See][for an overview of ADMM.]{boyd2011distributed}  By exploiting standard results we arrive at the scaled augmented Lagrangian corresponding to the constrained problem (\ref{eqn:constrained_admm_version}):
\[
L_{\rho }(\theta,\alpha, u)=   l(\theta) + \frac{\tau}{2} \Vert \Delta^{(1)} \alpha \Vert_1  + \rho u^T \left(\alpha - \Delta^{(k) } \theta\right) + \frac{\rho}{2} \|   \alpha +u  - \Delta^{(k)}\theta \|_{2}^{2} \, .
\]
This leads to the following ADMM  updates at each iteration $j$:
\begin{equation}
\label{ADMM}
\begin{array}{lll}
\theta^{j+1}  &  \leftarrow &  \underset{\theta}{ \text{argmin} }  \left(  l(\theta) +  \frac{\rho}{2}\left\| \alpha^{j} + u^{j} - \Delta^{(k)} \theta\right\|_2^{2} \right),\\
\alpha^{j+1} &  \leftarrow & \underset{\alpha}{ \text{argmin} }  \left( \frac{1}{2}\| \alpha - \Delta^{(k)}\theta^{j+1}  + u^{j}\|_2^{2} + \frac{\tau}{2\rho}\| \Delta^{(1)} \alpha\|_1  \right), \\
u^{j+1} &  \leftarrow &   u^{j} +  \alpha^{j+1}  - \Delta^{(k)}\theta^{j+1}.
\end{array}
\end{equation}

Note that in (\ref{ADMM}) the update for $\theta$ involves solving a sub-problem whose solution is not analytically available. To deal with this, we use the well known Broyden--Fletcher--Goldfarb--Shanno (BFGS) algorithm, which is very efficient because the gradient of the $\theta$ sub-problem objective is available in closed form.   The update for $\alpha$ can be computed in linear time by appealing to the  dynamic programming algorithm  from \cite{johnson2013dynamic}.

In the case $p=q =2$, both components of the objective function in (\ref{problem_for_discrete_estimator})  have closed-form gradients; see the appendix. Thus we can solve the problem using any algorithm that can use function and gradient calls.  In particular, we use BFGS.

\subsection{Solution path and model selection}


One of the major advantages of our approach is that it easily yields an entire deconvolution path, comprising a family of estimates $\hat{f}(\tau)$ over a grid of smoothness parameters.  Although in principle any deconvolution algorithm can yield such a path by solving the problem for many smoothing parameters, our path is generated in a highly efficient manner, using warm starts.  We initially solve (\ref{eqn:constrained_admm_version}) for a large value of $\tau$, for which the result estimate is nearly constant.  We then use this solution to initialize the ADMM at a slightly smaller value of $\tau$, which dramatically reduces the computation time compared to an arbitrarily chosen starting point.  We proceed iteratively until solutions have been found across a decreasing grid of $\tau$ values (which are typically spaced uniformly in $\log \tau$).

The resulting deconvolution path can be used to inspect a range of plausible estimates for $f_0$, with varying degrees of smoothness.  This allows the data analyst to bring to bear any further prior information (such as the expected number of modes in $f_0$) that was not formally incorporated into the loss function.  It also enables sensitivity analysis with respect to different plausible assumptions about the smoothness of the mixing distribution.  We illustrate this approach with a real-data example in Section \ref{sec:real_data_example}.

However, in certain cases---for example, in our simulation studies---it is necessary to select a particular value of $\tau$ using a default rule.  We now briefly describe heuristics for doing so based on $\ell_1$  and $\ell_2$ penalties with $k=1$.  These heuristics are used in our simulation studies.  For the case of $\ell_1$ regularization, motivated by \cite{tibshirani2012degrees}, we consider a surrogate AIC approach by computing
\[
\text{AIC}_{\tau} = l(\hat{\theta}_{\tau}) + k  + 1 + \left\vert  \left\{ i :  (\Delta^{(k+1)}\hat{\theta}_{\tau} )_i  \neq 0\right\} \right\vert
\]
and choosing the value of $\tau$ that minimizes this expression. Here, $\hat{\theta}_{\tau}$ denotes the solution given by L1-D with regularization parameter $\tau$.


In the case of $\ell_2$ regularization the situation is more difficult, since there is not an intuitive notion of the number of parameters of the model. Instead, we consider an ad-hoc procedure based on cross validation. This solves the problem for a grid of regularization parameters and chooses the parameter the minimizes $l(\hat{\theta}_{\tau}^{\text{held out}}) +  \|\Delta^{(k+1)}\hat{\theta}_{\tau}^{\text{held out}}\|_1$, where  $\hat{\theta}_{\tau}^{\text{held out}}$ is defined as
\[
\hat{\theta}_{\tau}^{\text{held out}} =  \hat{\theta}_{\tau} - \log(n\Delta) + \log(n_{\text{held out}}\Delta)
\]
with $ \hat{\theta}_{\tau}$  the solution obtained by fitting the model on the training set which consists of $75\%$ of the data. Here, $l(\hat{\theta}_{\tau}^{\text{held out}}) $ is evaluated using the counts from the held out set which has $25\%$ of the data, and $n_{\text{held out}}$ is the number of observations in such set. Our motivation for using the additional term $\|\Delta^{(k+1)}\hat{\theta}_{\tau}^{\text{held out}}\|_1$ is that  $\ell_0$ works well when the problem is formulated with $\ell_1$ regularization. However, when (\ref{problem_for_discrete_estimator}) is formulated using $\ell_2$, the penalty $\ell_0$ is not suitable so instead we use $\ell_1$.  Our simulations in the experiments section will show that this rule works well in practice.

\subsection{A toy example}
\begin{figure}[t]
	\includegraphics[width=6in]{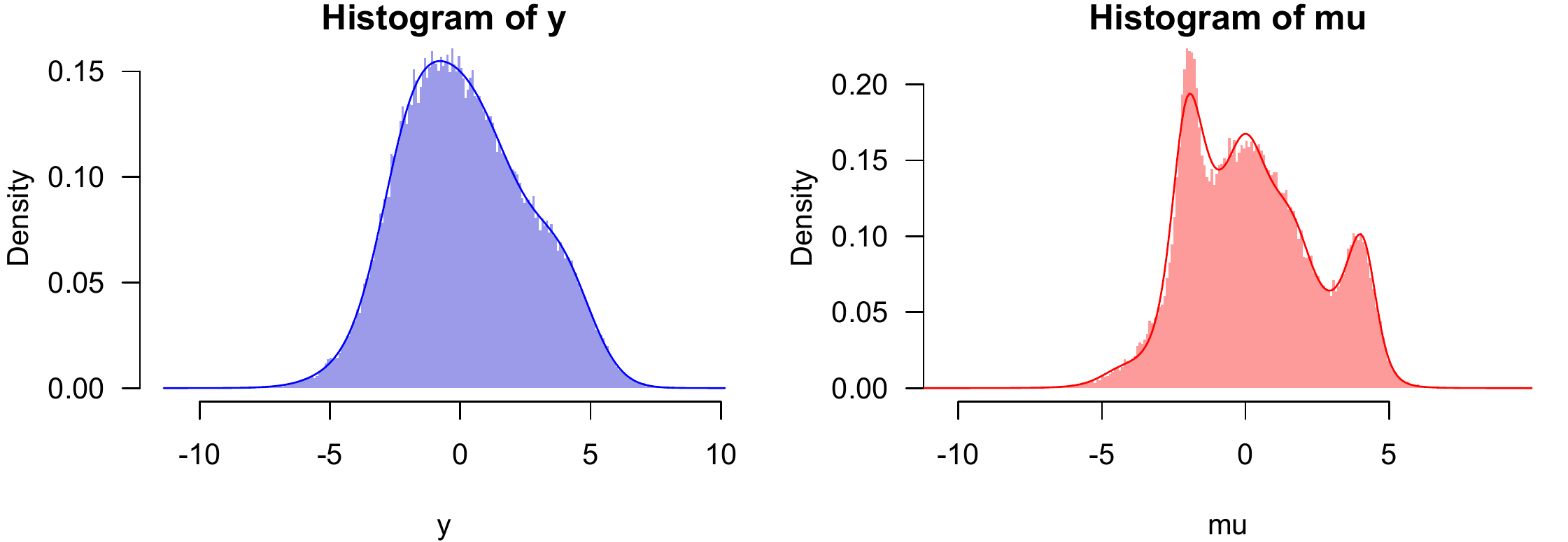}
	\caption{\label{toy_example_1} Example of deconvolution with an $\ell^2$ penalty on the discrete first derivative ($k=1$).  The left panel shows the data histogram together with the fitted marginal density  as a solid curve.  The right panel shows the histogram of the $\mu_i$'s together with the estimated mixing measure as a solid curve.}
\end{figure}

We conclude this section by illustrating the accuracy of our regularized deconvolution approach on a toy example. In this example we draw $10^5$   samples  $\{y_i\}$  with the corresponding $\{\mu_i\}$ drawn from  a mixture of three normal distributions.  Figure \ref{toy_example_1} shows the samples of both the observations $y_i$ (left panel) and the means $\mu_i$ (right panel), together with the reconstructions provided by our method.  Here, we solve the $\ell^2$ version of problem (\ref{problem_for_discrete_estimator}) by using the  BFGS algorithm and choose $\tau$ using the heuristic just described.

It is clear that regularizing with an $\ell_2$ penalty provides an excellent fit of the marginal density.  Surprisingly, it can also capture all three modes of the true mixing density, a feature which is completely obscured in the marginal).  Our experiments in Section \ref{sec:experiments} will show in a more comprehensive way that our method far outperforms other approaches in its ability to provide accurate estimates for multi-modal mixing distributions.

\section{Sensitivity analysis across the path}
\label{sec:real_data_example}

\begin{figure}
	\includegraphics[width=6in]{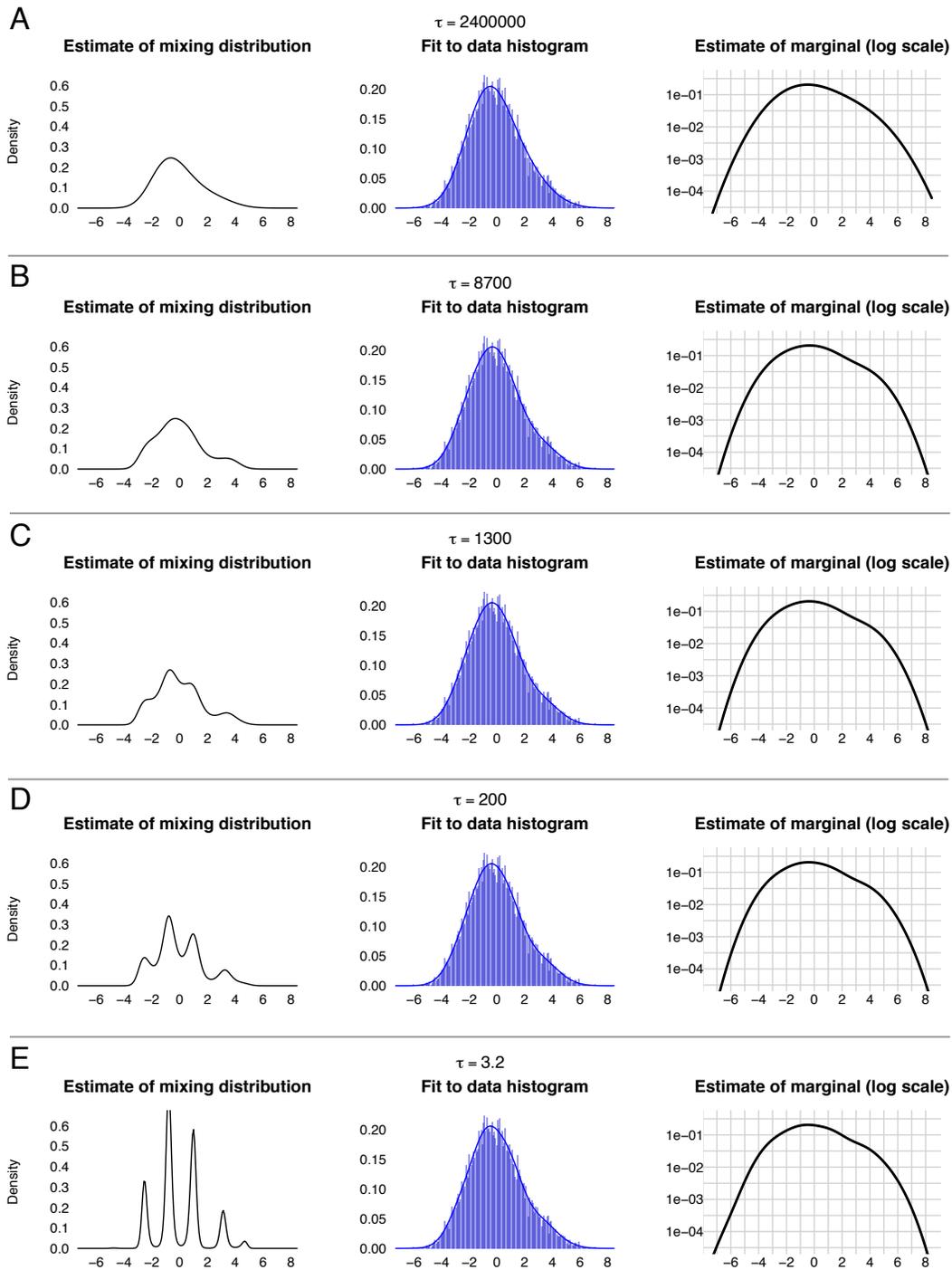}
	\caption{\label{fig:deconvolution_path}Rows A--E show five points along the deconvolution path for the prostate cancer gene-expression data.  The regularization parameter is largest in Row A and gets smaller in each succeeding row.}
\end{figure}

\begin{figure}[t!]
	\includegraphics[width=6in]{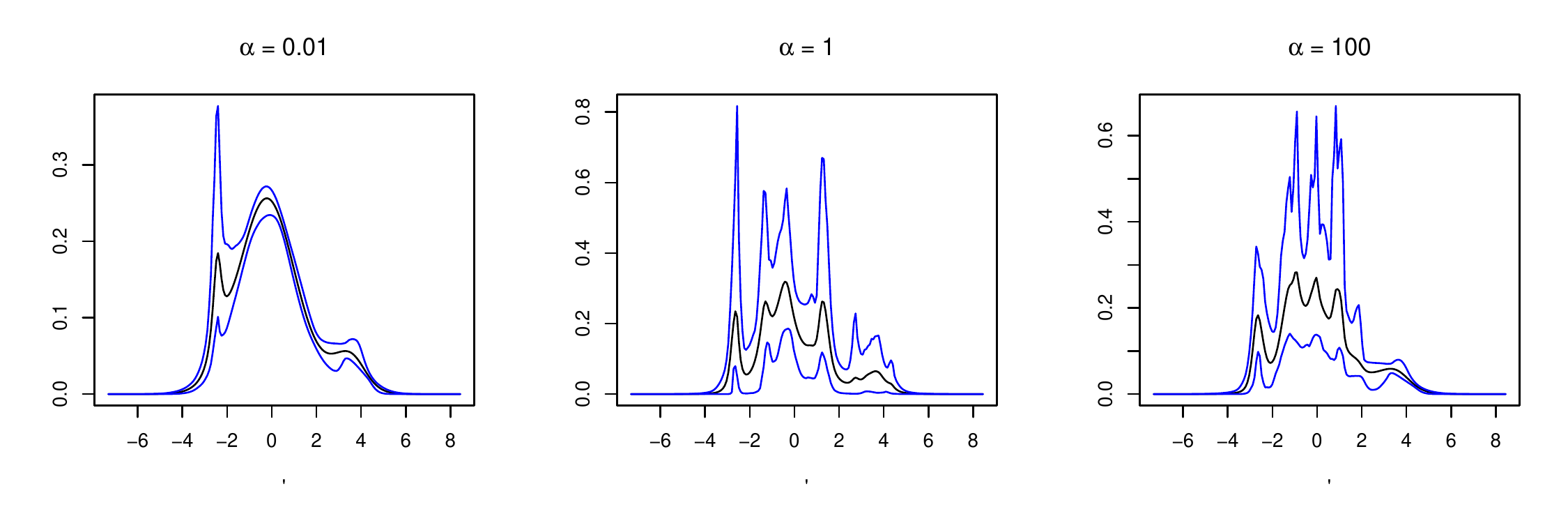}\\
		\includegraphics[width=6in]{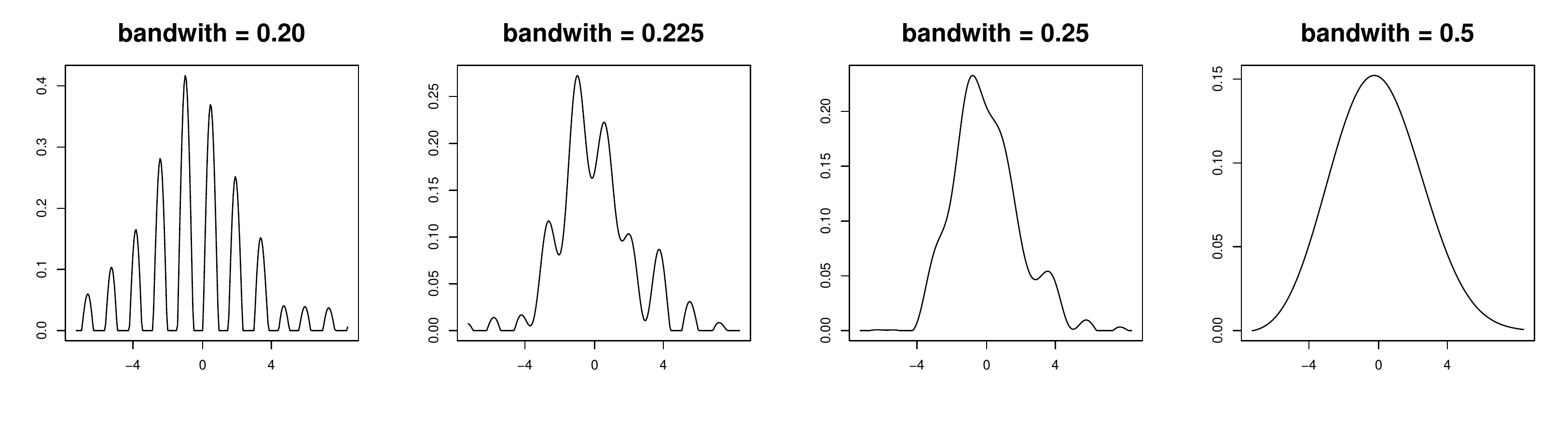}
	\caption{\label{fig:normal_mixture} The first three panels show 95$\%$ confidence bands and posterior mean from  15000 posterior samples from a mixture of 10 normals prior on the latent variables $
		\mu$. Panels 4-7 then shows the estimated mixing density using the kernel estimator with different bandwidth choices. }
\end{figure}

In this section, we provide an example of a sensitivity analysis using our deconvolution path estimator.  We examine data originally collected and analyzed by \citet{singh:etal:2002} on gene expression for 12,600 genes across two samples: 9 healthy patients and 25 patients with prostate tumors.  The data come as a set of 12,600 $t$-statistics computed from gene-by-gene tests for whether the mean gene-expression score differs between the two groups.  After turning these 12,600 $t$-statistics into $z$-scores via a CDF transform, we estimate a deconvolution path assuming a Gaussian convolution kernel.  We use an $\ell^2$ penalty and a grid of $\tau$ values evenly spaced on the logarithmic scale between $10^7$ and $10^{-3}$.

Each row of Figure \ref{fig:deconvolution_path} shows five points along the deconvolution path; the regularization parameter is largest in Row A and gets smaller in each successive row.  Within each row, the left column shows the estimated mixing distribution $\hat f$ for the given value of $\tau$.  The middle column shows the histogram of the data together with the fitted marginal density $\hat m = \phi * \hat f$.  The right column shows the fitted marginal density on the log scale, with a regular grid to facilitate comparison of the results across different values of $\tau$.

The figure shows that, while the estimate of the mixing distribution changes dramatically across the deconvolution path, the estimate of the marginal density is much more stable.  Even on the log scale (right column), the differences among the fitted marginal densities are not visually apparent in Panels B through E, even as the regularization parameter varies across three orders of magnitude.

This vividly demonstrates the well-known fact that deconvolution, especially of a Gaussian kernel, is a very ill-posed inverse problem.  There is little information in the data to distinguish a smooth mixing distribution from a highly multimodal one, and the model-selection heuristics described earlier are imperfect.  A decision to prefer Panel B to Panel E, for instance, is almost entirely due to the effect of the prior.  Yet for most common deconvolution methods, the mapping between prior assumptions and the smoothness of the estimate is far from intuitive.  By providing a full deconvolution path, our method makes this mapping visually explicit.

For reference, it is interesting to compare our deconvolution path to the results of other methods.  Figure \ref{fig:normal_mixture} shows the result of using MCMC to fit a 10-component mixture of normals to the mixing distribution.  The weights in the Gaussian mixture were assigned three different symmetric Dirichlet priors, with concentration parameter $\alpha \in \{0.01, 1, 100\}$.  Panels 1-3 in Figure \ref{fig:normal_mixture} show the posterior mean and posterior 95\% credible envelopes for $f_0$; these settings span a wide range of expected degrees of smoothness for $f_0$, and they yield a correspondingly wide range of posterior estimates.      Comparing figures \ref{fig:deconvolution_path} and \ref{fig:normal_mixture}, we see that the deconvolution path spans essentially the entire range of plausible posterior estimates for $f_0$ arising under any of the concentration parameters. In contrast, Panels  4-7 in Figure show that the kernel estimates are either overly smooth or wiggly.

\section{Theoretical properties}
\label{sec:penalized_likeilhood_approach}

In this section we establish some important theoretical properties of our estimators by thinking of them as approximations to problems involving sieves. We start by showing consistency of the mixing density in L1 norm. We do not provide convergence rates since, unlike the kernel  estimator from \cite{fan1991optimal} and the predictive recursion from \cite{newton2002nonparametric}, our method cannot be expressed in analytical form. Moreover, while the method from \cite{fan1991optimal} remarkably attains minimax rates under squared error loss for point estimates of the true mixing, in an earlier work  \cite{carroll1988optimal} suggested that  convergence  rates for Gaussian deconvolution might be too pessimistic given the unbounded support nature of the classes of functions considered. This is out of the scope of our paper, but we do provide evidence in the later sections that our estimator can outperform existing non-parametric methods. 



Throughout we consider $k \in \mathbb{N}-\{0\}$ and $q > 0$ to be fixed. We also denote by $\mathcal{P}$ the set of densities in $\mathbb{R}$, thus $\mathcal{P} := \{ f : \int_{\mathbb{R}} f(\mu)d\mu = 1; f\geq  0 \}$, where $d\mu$ denotes Lebesgue measure. Moreover, given any non-negative function $f$ we say that $b \in T_f$ if
\[
\text{max}\left(\| f\|_{\infty}, \| \left(\text{log}f\right)^{(k+1)}\|_{\infty},  \vert \left(\text{log}f\right)^{(k)}(0)\vert, \ldots, \vert \left(\text{log}f\right)(0)\vert  \right )  \leq b,
\]
and $\left(\text{log}f\right)^{(k+1)}$  is  $b$-Lipschitz. Here, given an arbitrary function $g$, we use the notation  $\|\cdot\|_{\infty}$ to indicate  the usual supremum norm on the support of $g$. Moreover $g$  is called    $T_m-$Lipschitz if it satisfies $\vert g(x) \,-\,  g(y)\vert  \leq T_m\,\vert x \,-\,y \vert,$  for all $x$  and $y$.

In this section two metrics of interest will be repeatedly used. The first one is the usual $\ell_1$ distance  $d(f,g)=  \int_{\mathbb{R}} \vert f -g \vert $. The other metric of interest will be the  Hellinger distance whose square is given as $H^2\left(f,g\right) : =   \int_{\mathbb{R}} \vert \sqrt{f}(\mu) - \sqrt{g}(\mu)  \vert^2 d\mu $. We also use the notation $ D_\text{KL}\left(f | g\right) = \int f(\mu)\log (f(\mu)/g(\mu)) d\mu$.
 Finally, for $q \in \mathbb{N}$, we define the functional $J_{k,q}$ which will be a generalization of the usual total variation. We set $J_{k,q}\left(f\right) :=  \int_{\mathbb{R}} \vert f^{(k+1)}\left(\mu\right)\vert^q d\mu $.

Next we state some assumptions for our first consistency result. Our approach is to consider the objective function in   (\ref{eqn:basic_variational_problem}) restricted to a smaller domain than that of its original formulation. This will then allows to prove that the new problem is not ill defined and also its solutions enjoy asymptotic properties of convergence towards the true mixing density. We refer to \cite{geman:hwang:1982} for a general perspective on sieves.

\paragraph{Assumptions and definitions}

Let  $A$ be a set of functions that  satisfies the following.

\begin{assumption}\label{as:1}
	Any function $f \in A$ satisfies that  $f \in \mathcal{P}$, $f > 0 $,  $J_{k,q}\left(\log f\right) < \infty$, and there exists a constant $t_f \in T_f$ .
\end{assumption}

\begin{assumption}\label{as:2}
	For all $m \in \mathbb{N}$, the  exists a set $S_m \subset A$  and constants $T_m, K_m > 0$ such that for all $f  \in S_m$  it holds that $	t_f = T_m$   and $ J_{k,q}\left(\log f\right)  \leq K_m$.	Moreover, for all $m$,  the set $S_m $  induces a tight set of of probability measures in $(\mathbb{R},\mathbb{B}(\mathbb{R}))$  satisfying $S_{m} \subset S_{m+1}$. In addition,  $\cup_m \,S_m$ is dense in $A$ with respect  to the metric $d$.
\end{assumption}

\begin{assumption}\label{as:3}
	Data model:  we assume that  $y_1,\ldots, y_n$  are independent draws from the density $\phi *f_0$,
	$f_0 \in A$, with $\phi$   being an arbitrary density function  satisfying  $\text{max}\left(\|\phi\|_{\infty},\|\phi^{\prime}\|_{\infty}  \right) < \infty$ and $\int_{\mathbb{R}}\log\left(\phi*f_0\right(\mu))\,\phi*f_0(\mu) d\mu < \infty$.  
\end{assumption}

\begin{assumption}\label{as:4}
	The set
	\[
	A_m = \left\{\alpha \in S_m \,:\,  D_\text{KL}\left(\phi*f_0 \vert\vert  \phi*\alpha\right)  = \underset{\beta \in S_m}{\text{inf }}  D_\text{KL}\left(\phi*f_0 \vert \vert  \phi*\beta\right) \right\},
	\]
	satisfies $d(f_0,\alpha)   \rightarrow 0  \,\,\,\text{as } \,\,\,\,m  \rightarrow \infty$  for all $\alpha \in A_m$, where the convergence is  uniform in $A_m$. 
\end{assumption}

\begin{assumption}\label{as:5}
	We assume that the  $y_1,\ldots, y_n$  are binned into $D_n$ different intervals with frequency counts   $\{x_j\}_{j=1,\ldots,D_n}$  such that  $n^{-1} \,\|x\|_{\infty}  \rightarrow 0 \,\, a.s.$,
	and we denote by $\xi_j$ an arbitrary point in  interval $j$. Note that this trivially holds for  the case where $D_n = n$ and $x_j = 1$f for all $j$.
\end{assumption}

\begin{assumption}\label{as:6}
	There exists  $f_m \in A_m$ such that
	\[
	\sum_{j =1 }^{D_n}  \frac{x_j}{n} \text{log}\left(\phi*f_m(\xi_j)\right)  \rightarrow \int_{\mathcal{R}}  \log\left(\phi*f_m(\xi)\right) \phi*f_0(\xi)d\xi   \,\text{  a.s.} \,\,\text{as }\, n \rightarrow \infty.
	\]
	If the $x_j = 1$ and $\xi_j = y_j$  for all $j =1,\dots$, this condition can be disregarded.
\end{assumption}

Assumptions (1)-(3) are natural for the original variational problem proposed earlier. The Lipschitz condition, the bounds on the behavior of the functions at zero, and the tightness of distributions are  merely used to ensure that the sieves will indeed be compact sets with respect to the metric $d$. Moreover, Assumption $(4)$  tell us that the sieves $S_m$ are rich enough to approximate the true mixing density sufficiently well. The last two assumptions can be disregarded when the counts in the bins are all one.


We are now ready to state our first consistency result. Its proof generalizes ideas from Theorem 1 in \cite{geman:hwang:1982}.
\begin{theorem}
	\label{consistency_1}
	
	If Assumptions (1-6) hold, then, the problem
	$$
	\begin{aligned}
	& \underset{f \in S_m}{\text{minimize}}
	& &
	- \sum_{j=1}^{D_n}  x_j \log \left(\phi*f\right)\left(\xi_j\right),   
	\end{aligned}
	$$
	has solution set $M^n_m \neq \emptyset$. Moreover, for any sequence $m_n$ increasing slowly enough it holds that
	\[
	\underset{\beta \in M^n_{m_n}}{\text{sup  }} d\left(\beta,f_0\right) \, \rightarrow \, 0 \,\,\text{  a.s}.
	\]
\end{theorem}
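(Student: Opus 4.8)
The plan is to follow the method-of-sieves consistency argument of \citet{geman:hwang:1982}, adapted to a convolution likelihood and to the fact that the data enter only through the binned counts $\{x_j,\xi_j\}$. The proof has three parts: (i) each sieve problem has a solution; (ii) every selection $\hat f_n\in M^n_{m_n}$ forces the Kullback--Leibler divergence $D_\text{KL}(\phi*f_0\,\|\,\phi*\hat f_n)$ to zero almost surely; and (iii) this marginal statement is ``deconvolved'' into $d(\hat f_n,f_0)\to 0$. For (i), I would first show that each $S_m$ is $d$-compact: by Assumption~\ref{as:2} every $f\in S_m$ satisfies $\|f\|_\infty\le T_m$, has $(\log f)^{(k+1)}$ bounded by $T_m$ and $T_m$-Lipschitz, and has its lower-order log-derivatives at $0$ bounded, so $S_m$ is uniformly bounded and equicontinuous on compacta; Arzel\`a--Ascoli yields locally uniform subsequential limits, tightness upgrades these to $d$-limits, and lower semicontinuity of $J_{k,q}(\log\cdot)$ together with closedness of the remaining constraints keeps the limit in $S_m$. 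On $S_m$ the map $f\mapsto(\phi*f)(\xi_j)$ is $d$-continuous (since $|(\phi*f)(\xi_j)-(\phi*g)(\xi_j)|\le\|\phi\|_\infty\,d(f,g)$ by Assumption~\ref{as:3}), strictly positive (as $f>0$ and $\phi$ is a density), and bounded above by $\|\phi\|_\infty$; hence $f\mapsto-\sum_{j=1}^{D_n}x_j\log(\phi*f)(\xi_j)$ is continuous on the compact set $S_m$, so $M^n_m\neq\emptyset$.

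For (ii), fix the approximant $f_{m_n}\in A_{m_n}$ supplied by Assumption~\ref{as:6}. Optimality of $\hat f_n$ gives
\[
\frac1n\sum_{j=1}^{D_n}x_j\log(\phi*\hat f_n)(\xi_j)\;\ge\;\frac1n\sum_{j=1}^{D_n}x_j\log(\phi*f_{m_n})(\xi_j).
\]
The right-hand side tends almost surely to $\int\log(\phi*f_{m_n})\,(\phi*f_0)$ by Assumption~\ref{as:6} (or by the SLLN when $x_j\equiv1$, Assumption~\ref{as:5}), and since $f_{m_n}\to f_0$ in $d$ by Assumptions~\ref{as:4}--\ref{as:6} this in turn converges to $\int\log(\phi*f_0)\,(\phi*f_0)$, which is finite by Assumption~\ref{as:3}. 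For the left-hand side I would prove a uniform strong law over the sieve,
\[
\sup_{f\in S_{m_n}}\Bigl|\tfrac1n\sum_{j=1}^{D_n}x_j\log(\phi*f)(\xi_j)-\int\log(\phi*f)\,(\phi*f_0)\Bigr|\;\to\;0\quad\text{a.s.},
\]
valid whenever $m_n$ grows slowly enough that the metric entropy of $S_{m_n}$ (controlled by $T_{m_n},K_{m_n}$ and the Lipschitz bound) is $o(n)$; here Assumption~\ref{as:5} bounds the contribution of densely populated bins and the envelope is integrable because $\log(\phi*f)\le\log\|\phi\|_\infty$ and its negative part is controlled under $\phi*f_0$ using the growth bounds built into $S_m$. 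Combining the two displays with the identity $\int\log(\phi*f)\,(\phi*f_0)=\int\log(\phi*f_0)\,(\phi*f_0)-D_\text{KL}(\phi*f_0\,\|\,\phi*f)$ gives $\limsup_n D_\text{KL}(\phi*f_0\,\|\,\phi*\hat f_n)\le0$, hence the divergence tends to $0$ a.s.

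For (iii), Pinsker's inequality turns this into $\|\phi*\hat f_n-\phi*f_0\|_1\to0$ a.s. I would argue by contradiction: on an event of probability one, if $\sup_{\beta\in M^n_{m_n}}d(\beta,f_0)\not\to0$ there is a subsequence and selections $\hat f_{n_k}$ with $d(\hat f_{n_k},f_0)\ge\varepsilon$. Since $\{\phi*\hat f_{n_k}\}$ is $L^1$-convergent it is tight, and because convolving with $\phi$ cannot disperse mass, $\{\hat f_{n_k}\}$ is tight as well; extract a weakly convergent sub-subsequence $\hat f_{n_{k_\ell}}\Rightarrow\nu$. Passing to characteristic functions, $\widehat{\phi}\,\widehat{\nu}=\lim\widehat{\phi}\,\widehat{\hat f}_{n_{k_\ell}}=\widehat{\phi}\,\widehat{f_0}$ (the last equality because $\phi*\hat f_{n_{k_\ell}}\to\phi*f_0$), and since the characteristic function of $\phi$ never vanishes (as for the Gaussian kernel considered here) this forces $\nu=f_0$, so $\hat f_{n_{k_\ell}}\Rightarrow f_0$ weakly. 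Finally I would upgrade this to $d$-convergence — via Scheff\'e's lemma together with the smoothness afforded by the sieve bounds, and, if needed, Assumption~\ref{as:4} to tie $\hat f_{n_{k_\ell}}$ to the $d$-convergent sequence of exact KL-projections $A_{m_{n_{k_\ell}}}$ — obtaining $\|\hat f_{n_{k_\ell}}-f_0\|_1\to0$, which contradicts $d(\hat f_{n_k},f_0)\ge\varepsilon$.

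The genuinely delicate point is the last step of (iii): deconvolution is ill-posed, so there is no continuous inverse of $f\mapsto\phi*f$, and the only handle is that the estimates are trapped in the sieves $S_{m_n}$; converting weak convergence of $\hat f_{n_{k_\ell}}$ to $f_0$ into $\ell_1$-convergence, and simultaneously pinning down how fast $m_n$ may grow so that the uniform strong law in (ii) holds, is where essentially all of the work lies. Parts (i) and (ii), by contrast, are routine once $d$-compactness of the sieves and the entropy bound are in hand.
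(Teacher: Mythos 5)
Your part (i) (existence via $d$-compactness of $S_m$ by Arzel\`a--Ascoli plus tightness and Fatou, and $d$-continuity of $f\mapsto(\phi*f)(\xi)$ through $|\phi*\alpha(\xi)-\phi*\beta(\xi)|\le\|\phi\|_\infty\,d(\alpha,\beta)$) is essentially the paper's Step 3 and is fine. The rest of your plan, however, departs from the paper's argument and has genuine gaps at exactly the points you flag as ``where the work lies.'' In (iii), your identifiability step inverts the convolution: from $\phi*\hat f_{n_k}\to\phi*f_0$ you pass to characteristic functions and divide by $\widehat{\phi}$, which requires $\widehat{\phi}$ to be nonvanishing. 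The theorem is stated for an arbitrary density $\phi$ with bounded sup norm and derivative (Assumption 3), so this step can fail; the paper never inverts the blur at all --- the needed identifiability is packaged into Assumption 4 (the KL-projections $A_m$ converge to $f_0$ in $d$), and the proof works directly with the distance $d$ on mixing densities via a Wald/Geman--Hwang covering argument: the ``bad'' set $\Pi_i=\{\alpha\in S_{m_i}: d(\alpha,f_0)\ge 2^{-(i-1)}\}$ is compact, is covered by finitely many $d$-balls on which $E_{f_0}\bigl[\sup\log(\phi*\beta)\bigr]<E_{f_0}\bigl[\log(\phi*f_{m_i})\bigr]$, and hence the likelihood over $\Pi_i$ is a.s.\ eventually dominated, so maximizers cannot lie there. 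Even granting a Gaussian $\phi$, your upgrade from weak convergence of $\hat f_{n_{k_\ell}}$ to $L^1$ convergence is unsupported: Scheff\'e needs pointwise convergence of the densities, and there is no uniform-in-$n$ equicontinuity to extract it, since the constants $T_{m_n}$, $K_{m_n}$ may grow arbitrarily fast along the sequence (the paper explicitly notes $T_{m_n}$ is unconstrained); likewise, Assumption 4 controls the exact KL-minimizers, not near-maximizers of the empirical criterion, and bridging that gap is precisely what the covering argument does.

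There is also a smaller gap in (ii): your uniform strong law over $S_{m_n}$ needs an integrable envelope for the negative part of $\log(\phi*f)$ under $\phi*f_0$, which would require moment-type conditions on $\phi*f_0$ (of order roughly $k+1$) that the assumptions do not provide; Assumption 3 only controls $\int\log(\phi*f_0)\,\phi*f_0$ and gives the free upper bound $\log\|\phi\|_\infty$. The paper deliberately avoids any two-sided uniform LLN: it uses the Glivenko--Cantelli theorem for the binned empirical measure (Assumption 5) together with the Portmanteau theorem applied to the continuous, \emph{upper-bounded} envelope functions $\xi\mapsto\sup_{d(\alpha,\beta)<\delta,\,\beta\in S_m}\log(\phi*\beta)(\xi)$, plus monotone convergence to shrink the balls (Steps 1, 2, 5), so only one-sided limsup bounds are ever needed and no entropy or envelope integrability condition enters. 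If you want to salvage your route, you would have to add hypotheses (nonvanishing $\widehat\phi$, marginal moments, and some uniform control tying near-optimizers to $A_{m_n}$) that the theorem as stated does not assume.
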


In Theorem  \ref{consistency_1}, the sequence $T_{m_n}$ is arbitrary and can grow as fast as desired. Moreover, the a.s statement is on the probability space $\left(\mathbb{R}^{\infty},\mathcal{F},F_0 \times F_0 \times F_0,\ldots\right)$ with $F_0$  the measure  on $\left(\mathbb{R},\mathbb{B}(\mathbb{R})\right) $ induced by $\phi*f_0$, and with $\mathcal{F}$ the completion of $\mathcal{B}\left(\mathbb{R}\right)^{\infty}$.

\section{Experiments}

\subsection{ Mixing density estimation}

\label{sec:experiments}

\begin{figure}[h!]
	\begin{center}
	\includegraphics[width=5in]{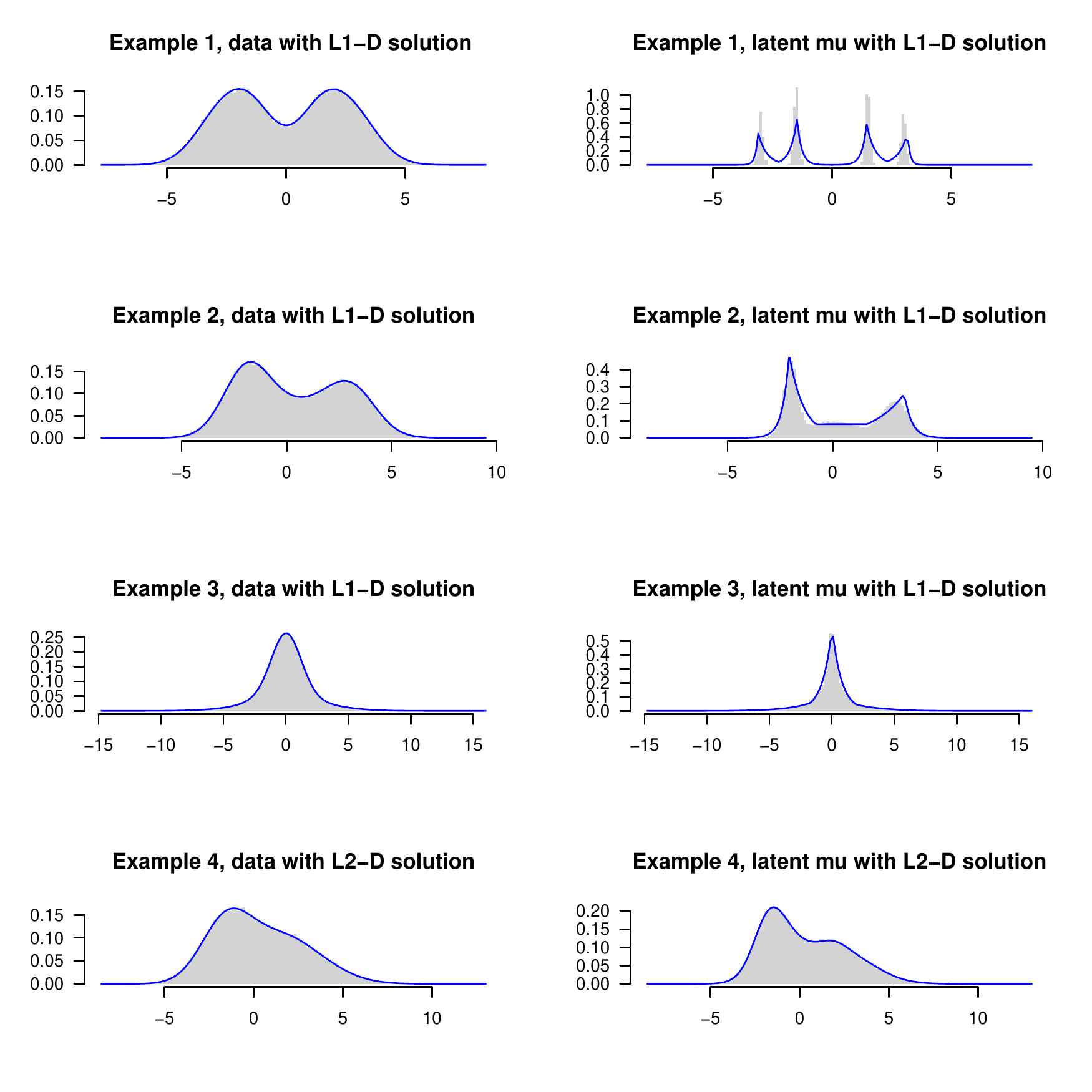}
		\caption{\label{fig:density_exmples} The first panel shows a histogram of observed data $\{y_i\}_{i=1}^n$ for our first example, and the L1-D marginal density estimate plotted on top of the histogram. Here the data has been generated as $y_i \sim  N(\mu_i,1)$ where $\mu_i$  is a draw from the mixing density. The second panel shows, for this same example, the histogram of $\{\mu_i\}_{i=1}^n$ (unobserved draws from the mixing density) and the  L1-D estimate of the mixing density plotted on top of it. Panels 3-6 show the respective cases of Examples 2 and 3. The last two panels show the corresponding plots for the L2-D solution and Example 4.  }  
	\end{center}
\end{figure}

\begin{figure}[h!]
	\includegraphics[width=6.6in,height =2.6in]{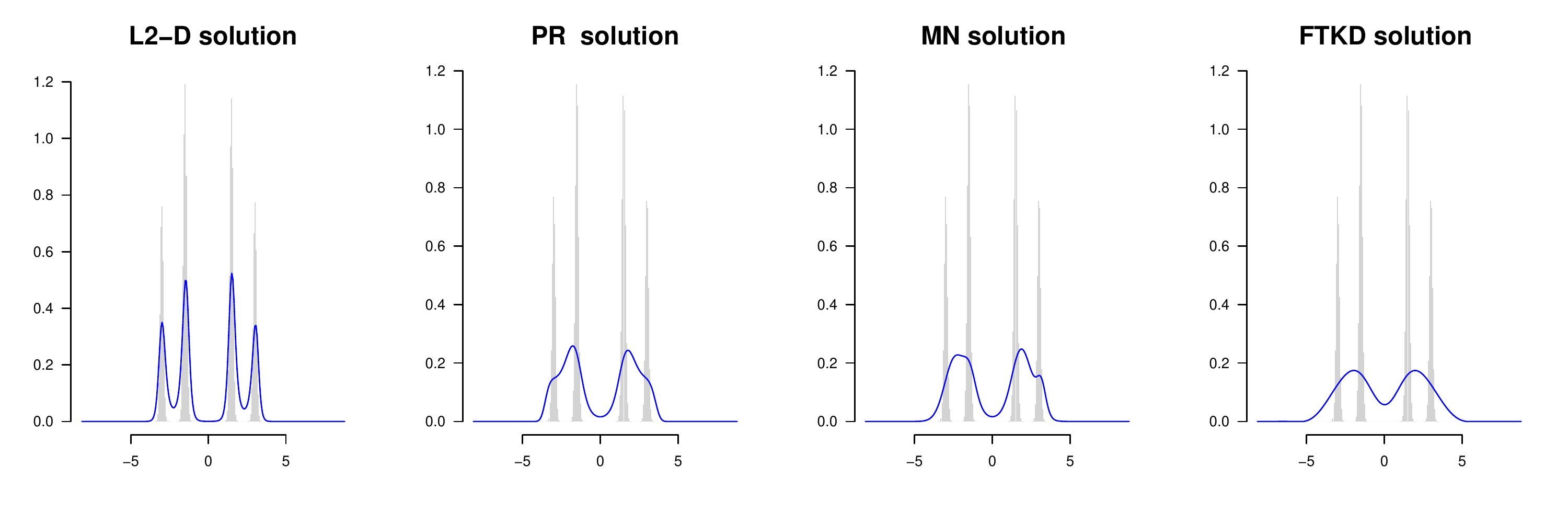}
	\caption{\label{solutions_example1} For the mixing density illustrated in Example 1 of Figure \ref{fig:density_exmples} we show the estimated mixing densities of different methods. The top two panels correspond to the estimated mixing densities using L2-D and PR algorithms along with latent $\mu$. Bottom two panels show the estimated density using MN and FTKD both with the latent $\mu$. For all four panels $n= 10^5$}
\end{figure}

In this section we show the potential gain given by our penalized approaches. We start by considering the task of recovering the true mixing distribution. We evaluate the performance of our methods described in Section \ref{our_approach} which we call L1-deconvolution (L1-D) and L2-deconvolution (L2-D)  depending on the regularization penalty used in the estimation. As competitors we consider   a mixture of normals model (MN), the predictive recursion algorithm (PR) from  \cite{newton2002nonparametric}, and the Fourier transform kernel deconvolution method (FTKD) from \cite{fan1991optimal}. Our comparisons are based on four examples which are shown in Figure \ref{fig:density_exmples}. These examples are  intended to illustrated the performance under different scenarios involving smooth and sharp densities. Next we describe the simulation setting and as well as the implementation details of the competing methods.

As a flexible Bayesian model we decided to use a prior for the mixing density based on a mixture of 10 normals (MN). Here, the weights of the mixture components are drawn from a Dirichlet prior with concentration parameter $1$. This is done in order to have a uniform prior on the simplex. For the locations of the mixture we consider non-informative priors given as $N(0,10^2)$ while for the variances of the mixture components we place a inverse gamma prior with shape parameter $0.01$ and rate $0.01$. The complete model can be then thought as a weak limit  approximation to the Dirichlet process, \citep{ishwaran2002exact}. Also, Gibss sampling   is accomplished straightforwardly by introducing a data augmentation with a variable $z_i$ indicating the component to which $\mu_i$  belong.

The next competing model is the predictive recursion algorithm from \cite{newton2002nonparametric} for which we choose the weights $w_i $  as in \cite{martin:tokdar:2012},  close to the limit of the upper bound of the convergence rate for PR given in  \cite{tokdar2009consistency}. Moreover we average the PR estimator over $10$ different permutations of the input data in order to obtain a smaller expected error \cite{tokdar2009consistency}.

On the other hand, for the Fourier transform kernel deconvolution method,  we consider different choices of bandwidth: the rule of thumb from \cite{fan1991optimal}, the plug in bandwidth from \cite{stefanski1990deconvolving}, and the 2-stage plug-in bandwidth from \cite{delaigle2002estimation}. Our estimates are obtained using the R package fDKDE available at \url{http://www.ms.unimelb.edu.au/~aurored/links.html}, which addresses the main concerns associated with the R package decon, see \cite{delaigle2014nonparametric}.

For the final competitor, the ``g-modeling'' approach from \cite{efron2016empirical} (g-M),  we use the newly released  R package deconvolveR.



\begin{table}[t!]
	\centering
	\caption{\label{tab:sim1}Mean squared error (MSE)  between the true and estimated mixing densities, averaging over 100 Monte Carlo simulations, for different methods  given samples from density Example 1. The acronyms here are given the text. The MSE is multiplied by $10^2$ and reported over two intervals containing 95\%  and 99\% of the mass of the mixing density. }
	\medskip
	\begin{small}
		\begin{tabular}{p{3pc} p{2pc} p{2pc} p{2pc} p{2pc}  p{2.5pc}  p{1.7pc}| p{2pc} p{2pc} p{2pc} p{2pc} p{2pc} p{1.5pc}}
			n &  $\begin{array}{l}
				\text{MN} \\
				95\%
			\end{array}$   & $\begin{array}{l}
			\text{PR} \\
			95\%
			\end{array}$     & $\begin{array}{l}
			\text{L2-D} \\
			95\%
			\end{array}$     & $\begin{array}{l}
			\text{L1-D} \\
			95\%
			\end{array}$   & $\begin{array}{l}
			\text{FTKD} \\
			95\%
			\end{array}$ 
			 & $\begin{array}{l}
			 \text{g-M} \\
			 95\%
			 \end{array}$
			  & $\begin{array}{l}
			\text{MN} \\
			99\%
			\end{array}$   & $\begin{array}{l}
			\text{PR} \\
			99\%
			\end{array}$   & $\begin{array}{l}
			\text{L2-D} \\
			99\%
			\end{array}$   & $\begin{array}{l}
			\text{L1-D} \\
			99\%
			\end{array}$  & $\begin{array}{l}
			\text{FTKD} \\
			99\%
			\end{array}$ 
			 & $\begin{array}{l}
			 \text{g-M} \\
			 99\%
			 \end{array}$\\
			\hline
			2000   & 9.47  & 9.12   & 9.39  & 9.69  & \textbf{8.89} &9.27  & 9.26  & 8.91  & 9.18  & 9.48 & \textbf{8.89} &9.01 \\
			10000  & 8.43  & 8.72   & 8.64  & \textbf{7.44}  & 8.87 &9.22 & 8.24  & 8.52  & 8.44  & \textbf{7.28} & 8.87  &9.00\\
			25000  & 8.34  & 8.46   & 7.27  & \textbf{5.54}  & 8.88 &9.32 & 8.15  & 8.27  & 7.13  & \textbf{5.43} & 8.16  &9.09\\
			50000  & 8.21  & 8.23   & 5.80  & \textbf{4.15}  & 8.85 &9.40 & 8.03  & 8.04  & 5.71  & \textbf{4.09} & 8.66  &9.18\\
			100000 & 8.34  & 8.05   & 4.79  & \textbf{3.38}  & 8.69 &9.50  & 8.14  & 7.86  & 4.69  & \textbf{3.35} & 8.49 &9.28\\
		\end{tabular}
	\end{small}
\end{table}

We now state the simulation setting for recovering the mixing density. Given the densities from Figure \ref{fig:density_exmples}, we consider varying the number of samples $n$ and for each fixed $n$ we  run 100 Monte Carlo simulations.   Moreover, for our methods we set $D$, the number of evenly space points in the grid, to 250. See the appendix for a sensitivity example of this parameter.

The results on Table \ref{tab:sim1}  illustrate a clear advantage of our penalized likelihood approaches over MN, PR  and FTKD which seems even more significant for larger samples size. The estimated mixing density  by L1-D is shown in Figure \ref{fig:density_exmples} where we can clearly see that L1-D can capture the peaks of the unknown mixing density. Moreover, Figure \ref{solutions_example1}  shows that L2-D  can also capture the structure of the true density. In contrast, MN, PR and FTKD all fail to provide reliable estimators.

\begin{table}[t!]
	\label{tab:sim2}
	\centering
	\caption{Simulation results for Examples 2, 3 and 4. Mean squared error (MSE)  between the true and estimated mixing densities, averaging over 100 Monte Carlo simulations, for different methods. The acronyms here are given the text. The MSE is multiplied by a constant and reported over two intervals containing 95\%  and 99\% of the mass of the mixing density.}
	\begin{subtable}{1\textwidth}
		\centering
		\caption{\label{tab:sim21} Example 2 results, here each entry of the table is the MSE multiplied by $10^3$. }
		\medskip
		\begin{small}
		\begin{tabular}{p{3pc} p{2pc} p{2pc} p{2pc} p{2pc}  p{2.5pc}  p{1.7pc}| p{2pc} p{2pc} p{2pc} p{2pc} p{2pc} p{1.5pc}}
			n &  $\begin{array}{l}
			\text{MN} \\
			95\%
			\end{array}$   & $\begin{array}{l}
			\text{PR} \\
			95\%
			\end{array}$     & $\begin{array}{l}
			\text{L2-D} \\
			95\%
			\end{array}$     & $\begin{array}{l}
			\text{L1-D} \\
			95\%
			\end{array}$   & $\begin{array}{l}
			\text{FTKD} \\
			95\%
			\end{array}$ 
			& $\begin{array}{l}
			\text{g-M} \\
			95\%
			\end{array}$
			& $\begin{array}{l}
			\text{MN} \\
			99\%
			\end{array}$   & $\begin{array}{l}
			\text{PR} \\
			99\%
			\end{array}$   & $\begin{array}{l}
			\text{L2-D} \\
			99\%
			\end{array}$   & $\begin{array}{l}
			\text{L1-D} \\
			99\%
			\end{array}$  & $\begin{array}{l}
			\text{FTKD} \\
			99\%
			\end{array}$ 
			& $\begin{array}{l}
			\text{g-M} \\
			99\%
			\end{array}$\\
			\hline
				2000  & 6.20   & 2.54   & 2.74  & \textbf{2.38}  & 6.07 &8.23  & 5.47  & 2.37  & 2.49  & \textbf{2.13}  & 5.86 &7.06\\
				10000 & 3.45   & 1.75   & \textbf{1.60}  & 1.68  & 5.98 &5.82 & 3.05  & 1.60  & \textbf{1.46}  & 1.49  &  5.76 &5.45\\
				25000 & 2.31   & 1.46   & \textbf{1.19}  & 1.35  & 5.99 &5.01 & 2.20  & 1.35  & \textbf{1.09}  & 1.19  & 5.70  &4.77\\
				50000 & 1.24   & 1.28   & \textbf{0.89}  & 1.18  & 5.89 &4.68 & 1.10  & 1.17  & \textbf{0.81}  & 1.05  & 5.66  &4.39\\
				100000 & 0.78  & 1.07   & \textbf{0.74}  & 0.87  & 4.97 &4.03 & 0.69  & 0.98  & \textbf{0.67}  & 0.77  & 4.85  &3.85\\
			\end{tabular}
		\end{small}
	\end{subtable}\\
	\begin{subtable}{1\textwidth}
		\centering
		\caption{\label{tab:sim22} Example 3. Each entry of the table corresponds to the MSE multiplied by $10^3$.}
		\medskip
		\begin{small}
			\begin{tabular}{p{3pc} p{2pc} p{2pc} p{2pc} p{2pc}  p{2.5pc}  p{1.7pc}| p{2pc} p{2pc} p{2pc} p{2pc} p{2pc} p{1.5pc}}
				n &  $\begin{array}{l}
				\text{MN} \\
				95\%
				\end{array}$   & $\begin{array}{l}
				\text{PR} \\
				95\%
				\end{array}$     & $\begin{array}{l}
				\text{L2-D} \\
				95\%
				\end{array}$     & $\begin{array}{l}
				\text{L1-D} \\
				95\%
				\end{array}$   & $\begin{array}{l}
				\text{FTKD} \\
				95\%
				\end{array}$ 
				& $\begin{array}{l}
				\text{g-M} \\
				95\%
				\end{array}$
				& $\begin{array}{l}
				\text{MN} \\
				99\%
				\end{array}$   & $\begin{array}{l}
				\text{PR} \\
				99\%
				\end{array}$   & $\begin{array}{l}
				\text{L2-D} \\
				99\%
				\end{array}$   & $\begin{array}{l}
				\text{L1-D} \\
				99\%
				\end{array}$  & $\begin{array}{l}
				\text{FTKD} \\
				99\%
				\end{array}$ 
				& $\begin{array}{l}
				\text{g-M} \\
				99\%
				\end{array}$\\
				\hline
				2000   & 5.28  & 1.83   & 2.09  & \textbf{0.96}  & 4.95 &7.16  & 3.45  & 1.21  & 1.37  & \textbf{0.63} & 3.25&4.54  \\
				10000  & 3.06  & 1.38   & 1.46  & \textbf{0.61}  & 4.86 &1.45 & 1.99  & 0.90  & 0.95  & \textbf{0.40} & 3.18&9.49\\
				25000  & 1.51  & 1.16   & 1.18  & \textbf{0.47}  & 4.61 &1.18 & 0.99  & 0.72  & 0.77  & \textbf{0.31} & 3.01&7.75 \\
				50000  & 0.95  & 1.06   & 1.00  & \textbf{0.42}  & 3.77 &1.13 & 0.62  & 0.70  & 0.66  & \textbf{0.28} & 2.47&9.11\\
				100000 & 0.72  & 0.95   & 0.86  & \textbf{0.38}  & 3.48 &2.42 & 0.47  & 0.62  & 0.56  & \textbf{0.25} & 2.29&1.58\\
			\end{tabular}
		\end{small}
	\end{subtable}\\
	\bigskip
	\begin{subtable}{1\textwidth}
		\centering
		\caption{\label{tab:sim23}Example 4, heach entry of the table is the MSE multiplied by  $10^4$.}
		\medskip
		\begin{small}
		\begin{tabular}{p{3pc} p{2pc} p{2pc} p{2pc} p{2pc}  p{2.5pc}  p{1.7pc}| p{2pc} p{2pc} p{2pc} p{2pc} p{2pc} p{1.5pc}}
			n &  $\begin{array}{l}
			\text{MN} \\
			95\%
			\end{array}$   & $\begin{array}{l}
			\text{PR} \\
			95\%
			\end{array}$     & $\begin{array}{l}
			\text{L2-D} \\
			95\%
			\end{array}$     & $\begin{array}{l}
			\text{L1-D} \\
			95\%
			\end{array}$   & $\begin{array}{l}
			\text{FTKD} \\
			95\%
			\end{array}$ 
			& $\begin{array}{l}
			\text{g-M} \\
			95\%
			\end{array}$
			& $\begin{array}{l}
			\text{MN} \\
			99\%
			\end{array}$   & $\begin{array}{l}
			\text{PR} \\
			99\%
			\end{array}$   & $\begin{array}{l}
			\text{L2-D} \\
			99\%
			\end{array}$   & $\begin{array}{l}
			\text{L1-D} \\
			99\%
			\end{array}$  & $\begin{array}{l}
			\text{FTKD} \\
			99\%
			\end{array}$ 
			& $\begin{array}{l}
			\text{g-M} \\
			99\%
			\end{array}$\\
			\hline
			2000   & 20.6  & 4.75   & \textbf{1.82}  & 3.48  & 3.25 &7.06 & 16.8  & 4.03  & \textbf{1.15}  & 2.88 & 3.00&5.88\\
			10000  & 7.64  & 1.89   & \textbf{0.65}  & 1.93  & 2.87 &4.20 & 6.23  & 1.60  & \textbf{0.53}  & 1.60 & 2.67&3.53\\
			25000  & 2.04  & 1.10   & \textbf{0.48}  & 2.19  & 2.57 &2.34 & 1.67  & 0.93  & \textbf{0.39}  & 1.80 & 2.37&1.95\\
			50000  & 1.03  & 0.69   & \textbf{0.36}  & 1.20  & 2.02 &1.95 & 0.85  & 0.58  & \textbf{0.30}  & 1.00 & 1.86&1.62\\
			100000 & 0.50  & 0.55   & \textbf{0.39}  & 0.90  & 1.36 &1.49 & 0.40  & 0.46  & \textbf{0.32}  & 0.85 & 1.25&1.22\\
			\end{tabular}
		\end{small}
	\end{subtable}
\end{table}

For our example density 2, we observe from Table 2 that in general  L2-D and L1-D offer the best performance. In the case of example 3, we observe that the L1-D again provides better results than the competitors in all the scenarios of sample sizes considered. Even with only 10000 samples L1-D is closer to the true density than all the other methods with more samples. Moreover, L2-D  performs much better than PR and FTKD.  Also, L2-D  seems to be a clear competitor to MN. In the final example density 4, we observe that L2-D is the best method in all the scenarios considered. 

Overall,  we have shown that for estimating the mixing density, L1-D and L2-D can perform well under different settings, even when other methods exhibit notable deficiencies. The advantage is amplified by the fact that both of our methods  are less computationally intensive that MN,  with L2-D requiring  around 40 seconds to handle problems with $D = 250$, and L1-D under the same problem conditions typically requires  around 5 minutes for a full solution path across 50 values of the tuning parameter.



\subsection{Normal means estimation}

After evaluating our proposed methodology for the task of estimating the mixing density, we now, for the case of standard normal kernel, focus on the estimation of the normals means $\{\mu_i\}$. For this,  we consider comparisons using the best four among  the methods  used before in addition to other procedures that we  briefly discuss next.

As it is well known (e.g \cite{efron:2011} for description and references ), assuming that the marginal density is known, one can use Tweedie's formula to estimate  $\{\mu_i\}$. For all the methods here this is the approach that we take, except for MN in which case we use the posterior means resulting from Gibss sampling inference. For the methods depending on  grid estimator, the number of bins is set to $250$. 

For the method  of  \cite{efron:2011}, we set to 5 the degree of the polynomial approximation to the logarithm of the marginal true density (we found larger values to be less numerically stable). The Poisson surrogate model is then fit in R using the command glm. We also compare against  the  general maximum likelihood empirical-Bayes estimator (GMLEB) from \cite{jiang2009general}, which is a discretized version of the original  Kiefer--Wolfowitz estimator. For our comparisons we use the algorithm proposed in \cite{koenker2014convex}  based on an interior point method algorithm (GMLEBIP). We use the R package REBayes in order to obtain this estimator (\cite{koenker2013rebayes}). On the other hand, for the shape constrained  (SC) estimator from \cite{koenker2014convex}, we rely on a binned count approach based on a weighted likelihood  using R code provided by the authors. Moreover, we consider the estimator from \cite{brown2009nonparametric} using the default choice of bandwidth $h_n = (\log n)^{-1/2}$, which we refer to as BG. The finally competitor is the  non-linear projection (NLP) estimator from \cite{wager2013geometric}.

\begin{table}
	\centering
	\caption{\label{tab:sim3}Mean squared error, of the normal means estimates, times 100 , averaging over 100 Monte Carlo simulations, for different methods  given samples from  example 1. }
	\medskip
	\begin{small}
		\begin{tabular}{p{2.5pc} p{2.5pc} p{2.5pc} p{2.5pc} p{2.5pc} p{2.5pc} p{3.5pc} p{2pc} p{2pc} p{2.5pc}}
			n           & L2-D  & L1-D  & PR    & MN      & Efron  & GMLEBIP  & SC     &  BG       & NLP\\
			2000        & 64.31 & 64.29 & 64.16 & 67.50   & 70.27  & 64.48    & 68.24  & 65.57     & \textbf{64.11}\\			
			10000       & 63.89 & 63.68 & 63.86 & \textbf{63.18}   & 70.00  & 63.80    & 65.56  & 64.06     & 63.27\\
			25000       & 63.52 & \textbf{63.37} & 63.69 & 63.84   & 69.96  & 63.39    & 64.66  & 63.65     &  63.60\\
			50000       & 63.27 & \textbf{63.21} & 63.55 & 65.20   & 69.85  & 63.23    & 64.15  & 63.44     & 63.26\\
			100000      & 63.27 & 63.23 & 63.59 & 63.79   & 69.89  & 63.21    & 63.86  & 63.39     & \textbf{63.18}\\
		\end{tabular}
	\end{small}
\end{table}

From Table \ref{tab:sim3} it is clear that the best methods for example 1 are  L1-D, L2-D, GMLEBIP, and NLP. Moreover, it is not surprising that  GMLEBIP  provides good estimates given that the true mixing  density has mixture components that have small variance. 

For example 2, we can see  from Table \ref{tab:sim4} that again  L2-D and L1-D provide competitive estimates. The other suitable methods for this example seem to be PR and GMLEBIP. With slightly worse estimates MN, BG and SC provide results that are still competitive, with SC being particularly attractive given its computational speed to provide solutions.

\begin{table}
	\centering
	\caption{\label{tab:sim4}Mean squared error, of the normal means estimates, times 100, averaging over 100 Monte Carlo simulations, for different methods  given samples from  example 2.}
	\medskip
	\begin{small}
		\begin{tabular}{p{2.5pc} p{2.5pc} p{2.5pc} p{2.5pc} p{2.5pc} p{2.5pc} p{3.5pc} p{2pc} p{2pc} p{2.5pc}}
			n           & L2-D  & L1-D  & PR    & MN      & Efron  & GMLEBIP  & SC     &  BG       & NLP\\
			2000        & 65.42 & 65.46 & 65.36 & \textbf{64.33}   & 69.97  & 66.20    & 69.60  & 66.99     & 65.75\\			
			10000       & \textbf{64.98} & 65.06 & 65.08 & 65.66   & 69.75  & 65.29    & 67.10  & 65.54     & 65.95\\
			25000       & 65.19 & \textbf{65.08} & 65.32 & 65.21   & 69.94  & 65.12    & 66.42  & 65.49     & 65.09\\
			50000       & \textbf{64.99} & 65.08 & 65.13 & 65.44   & 69.93  & 65.03    & 65.97  & 65.19     & 65.24\\
			100000      & 65.02 & \textbf{64.95} & 65.14 & 65.03   & 69.84  & 65.02    & 65.69  & 65.14     & 64.96\\
		\end{tabular}
	\end{small}
\end{table}

\begin{table}[t!]
	\centering
	\caption{\label{tab:sim5}Mean squared error, of the normal means estimates, times 100, averaging over 100 Monte Carlo simulations, for different methods  given samples from example 3.}
	\medskip
	\begin{small}
		\begin{tabular}{p{2.5pc} p{2.5pc} p{2.5pc} p{2.5pc} p{2.5pc} p{2.5pc} p{3.5pc} p{2pc} p{2pc} p{2.5pc}}
			n           & L2-D  & L1-D  & PR    & MN      & Efron  & GMLEBIP  & SC     &  BG       & NLP\\
			2000        & \textbf{64.99} & 64.96 & 65.41 & 69.05   & 70.54  & 65.74    & 69.70  & 66.99     & 65.77\\			
			10000       & 64.73 & 64.76 & 64.96 & \textbf{64.21}   & 71.34  & 64.85    & 66.92  & 65.36     & 64.81\\
			25000       & \textbf{64.52} & 64.57 & 64.75 & 64.97   & 71.42  & 64.65    & 66.62  & 64.82     & 64.62\\
			50000       & \textbf{64.51} & 64.61 & 64.73 & 65.38   & 71.52  & 64.64    & 66.60  & 64.67     & 64.57\\
			100000      & 64.54 & \textbf{64.41} & 64.76 & 64.54   & 71.96  & 64.56    & 65.17  & 64.62     & 64.46\\
		\end{tabular}
	\end{small}
\end{table}

\begin{table}[t!]
	\centering
	\caption{\label{tab:sim6}Mean squared error, of the normal means estimates, times 100, averaging over 100 Monte Carlo simulations, for different methods  given samples from  example 4.}
	\medskip
	\begin{small}
		\begin{tabular}{p{2.5pc} p{2.5pc} p{2.5pc} p{2.5pc} p{2.5pc} p{2.5pc} p{3.5pc} p{2pc} p{2pc} p{2.5pc}}
			n           & L2-D  & L1-D  & PR    & MN      & Efron  & GMLEBIP  & SC     &  BG       & NLP\\
			2000        & 79.63 & 80.20 & 79.89 & \textbf{78.68}   & 80.00  & 80.97    & 85.47  & 81.58     & 80.01\\			
			10000       & \textbf{79.32} & 79.35 & 79.42 & 79.34   & 79.99  & 79.74    & 82.18  & 79.89     & 79.64\\
			25000       & 79.39 & 79.31 & 79.48 & \textbf{78.79}   & 79.96  & 79.30    & 80.98  & 79.65     & 79.39\\
			50000       & \textbf{79.21} & 79.25 & 79.29 & 79.85   & 79.82  & 79.40    & 80.58  & 79.36     & 79.39\\
			100000      & 79.29 & \textbf{79.22} & 79.37 & 79.51   & 79.91  & 79.30    & 80.15  & 79.37     & 79.36\\
		\end{tabular}
	\end{small}
\end{table}

Finally, for examples 3 and 4 we can see in Tables \ref{tab:sim5} and \ref{tab:sim6} respectively that L1-D and L2-D are the best   or among the best methods in terms of mean squared distance when recovering the unknown means $\mu_i$. Table \ref{tab:sim6} also suggests  that Efron's estimator is more suitable when the true mixing density is very smooth with no sharp peaks. 


\section{Discussion}

In many problems in statistics and machine learning, we observe a blurred version of an unknown mixture distribution which we would like to recover via deconvolution.  The main challenge is to find an approach that is computationally fast but still possesses nice statistical
guarantees in the form of rates of convergence. We propose a two-step ``bin-and-smooth'' procedure that achieves both of these goals. This reduces the deconvolution problem to a Poisson-regularized model would can be solved either via standard methods for smooth optimization, or with a fast version of the alternating-direction method of multipliers (ADMM).  Our approach reduces the computational cost compared to a fully Bayesian method and yields a full deconvolution path to illustrate the sensitivity of our solution to the specification of the amount of regularization.  We provide theoretical guarantees for our procedure. In particular, under suitable regularity conditions, we establish the almost-sure convergence of our estimator towards the mixing density.

There are a number of directions for future inquiry, including multivariate extensions and extensions to multiple hypothesis testing.  These are active areas of current research.

\singlespacing
\begin{small}

	\bibliographystyle{abbrvnat}
	\bibliography{deconvolution_V4}
\end{small}


\appendix

\section{Technical supplement}

\subsection{Gradient expression for $\ell_2$ regularization}

Here we write the mathematical expressions for the gradient of the objective function when performing L2 deconvolution, As in Section 3.3 of the main document. Using the notation there, we have that
$$
[\nabla l(\theta)]_j = \sum_{i=1}^D G_{ij} e^{\theta_j} \left( \frac{x_i}{\lambda_i(\theta)} - 1 \right) \, ,
$$
and
$$
\nabla \Vert \Delta^{(k+1)} \theta \Vert_2^2 = 2 \left(\Delta^{(k+1)} \right)^T \Delta^{(k+1)} \theta \, .
$$

\subsection{Proof of Theorem \ref{consistency_1}  }

\begin{proof}
	
	Motivated by \cite{geman:hwang:1982}, given $\alpha \in A$ we define the function $F\left(\xi,\alpha\right) = \left(\phi*\alpha\right)\left(\xi\right)$ for $\mu \in \mathbb{R}$. Clearly, $F\left(\xi,\alpha\right)$ is a density that induces a measure in $\mathbb{R}$ that is absolutely continuous with respect to the Lebesgue measure in $\mathbb{R}$. Also, we observe that if $\alpha,\beta \in A$, then, for any Borel measurable set $E$,  we have by Tonelli's theorem that
	\[
	\begin{array}{lll}
	\left\vert \int_{E}\phi*\alpha(\mu)d\mu - \int_{E} \phi*\beta(\mu)d\mu     \right\vert 
	&  = &  \left \vert  \int_{\mathcal{R}} \left( \int_{E} \phi(  \mu - y)d\mu \right) \left( \alpha(y) - \beta(y)\right)dy  \right\vert  \\
	& \leq & d(\alpha,\beta).
	\end{array}
	\]
	Hence $d(\alpha,\beta) = 0$  implies that $\phi*\alpha$ and $\phi*\beta$ induce the same probability measures in $(\mathcal{R},\mathcal{B}\left(\mathcal{R}\right))$.
	
	Next we verify the assumptions in Theorem 1 from \cite{geman:hwang:1982}. This is done into different steps below. Steps 1-4  verify the assumptions B1-B4  in Theorem 1 from \cite{geman:hwang:1982}. Steps 5-6 are needed in the general case in which the data is binned. These are  also related to  ideas from \cite{wald1949note}.
	
	$\bold{Step }$ 1
	
	Given $\alpha \in A$ and $\epsilon > 0$, the function
	$$
	\xi \rightarrow  \underset{\beta  \in S_m: d\left(\alpha,\beta\right) < \epsilon }{\text{sup}} \left(\phi*\beta\right)(\xi)
	$$
	is continuous and therefore measurable on $\xi$. To see this, simply note that for any $\beta \in  A$  we have that
	\[
	\|\left(\phi*\beta\right)^{\prime}\|_{\infty} \, =  \,  \|\left(\phi^{\prime}*\beta\right)\|_{\infty}  \leq \|\phi^{\prime}\|_{\infty} \int_{\mathcal{R}} \beta(\mu) d\mu \, = \, \|\phi^{\prime}\|_{\infty}.
	\]
	Hence all the functions $\beta \in S_m$ are $\left(\|\phi^{\prime}\|_{\infty}+1\right)$-Lipschitz and the claim follows. Also, we note that
	\[
	\begin{array}{lll}
	\underset{\epsilon \rightarrow 0}{\text{lim}}  \underset{\beta \in S_m : d\left(\alpha,\beta\right) < \epsilon }{\text{sup}} \left(\phi*\beta\right)(\xi)  =  \phi*\alpha(\xi).\\
	\end{array}
	\]
	This follows by noticing that
	\[
	\begin{array}{lll}
	\left\vert    \underset{\beta \in S_m : d\left(\alpha,\beta\right) < \epsilon }{\text{sup}} \left(\phi*\beta\right)(\xi)  \,-\, \phi*\alpha(\xi)\right\vert  &\leq &     \underset{\beta \in S_m : d\left(\alpha,\beta\right) < \epsilon }{\text{sup}} \left \vert     \left(\phi*(\beta-\alpha)\right)(\xi) \right\vert   \\
	& \leq &  \|\phi\|_{\infty}\underset{\beta : d\left(\alpha,\beta\right) < \epsilon }{\text{sup}} d\left(\alpha,\beta\right) \\
	& \leq & \epsilon\,\|\phi\|_{\infty}.
	\end{array}
	\]
	
	$\bold{Step }$ 2

	Define $E_{\alpha}\left(g\right) := \int_{\mathbb{R}} g(\xi)\left(\phi*\alpha\right)(\xi)d\xi$ for any function $g$. Then for any $\alpha \in A$  and   $\epsilon > 0$  we have
	\[
	\begin{array}{lll}
	E_{f_0}\left( \log \left( \underset{\beta \in S_m: d\left(\alpha,\beta\right) < \epsilon }{\text{sup}} \left(\phi*\beta\right)(\xi) \right)\right)  & \leq & E_{f_0}\left( \log \left( \underset{\beta : d\left(\alpha,\beta\right) < \epsilon }{\text{sup}} \left(\phi*\beta\right)(\xi) \right)\right) \\
	& \leq &  \int_{\mathcal{R}} \text{log}\left(\|\phi\|_{\infty}\right)\,\phi*f_0(\xi)\,d\xi \\
	& < &  \infty.
	
	\end{array}
	\]

	$\bold{Step }$ 3
	
	Next we  show that $S_m$  is compact on $\left(A,d\right)$. Throughout, we use the notation  $\rightarrow_u$ to indicate uniform convergence.  To show the claim, choose $\{\alpha_l\}$ a sequence in $S_m$. Then since $\{ \left(\text{log}(\alpha_l)\right)^{(k+1)}\}$ are $T_m-$Lipschitz
	and uniformly bounded it follows by Arzela-Ascoli Theorem that there exists a sub-sequence $\{\alpha_{1,l}\} \subset \{\alpha_l\}$  such that $\left(\text{log}(\alpha_{1,l})\right)^{(k+1)} \, \rightarrow_u \, g_{k+1}$ in $[-1,1]$ for some function $g_{k+1} : [-1,1] \rightarrow \mathcal{R} $ which is also $T_m-$Lipschitz. Note that we can again use Arzela-Ascoli Theorem applied to the sequence $\{\alpha_{1,l}\}$ to ensure that there exists a sub-sequence $\{\alpha_{2,l}\} \subset \{\alpha_{1,l}\}$ such that  
	$\left( \text{log}(\alpha_{1,l})\right)^{(k+1)} \, $ $\rightarrow_u \,  g_{k+1},$ in $[-2,2]$. Thus we extend the domain of $g_{k+1}$ if necessary.
	
	Proceeding by induction we conclude that for every $N \in  \mathbb{N}$   there exists a sequence $\{\alpha_{N,l}\}_{l \in \mathbb{N}} \subset \{\alpha_{N-1,l}\}_{l \in \mathbb{N}} $  such that
	\[
	\left(\log(\alpha_{N,l})\right)^{(k+1)} \, \rightarrow_u \, g_{k+1},\,\,\,\,\text{as}\,\,\,\,l\,\rightarrow  \,\infty,
	\]
	in $[-N,N]$ as $l \rightarrow \infty$. Hence with Cantor's diagonal argument  we conclude that there exists a sub-sequence $\{\alpha_{l_j}\} \subset \{\alpha_{l}\}$ such that
	\[
	\left(\log(\alpha_{l_j})\right)^{(k+1)} \, \rightarrow_u \, g_{k+1}\,\,\,\,\text{as}\,\,\,\,j\,\rightarrow  \,\infty,
	\]
	in $[-N,N]$ for all $N \in  \mathbb{N}$. Since  $\vert\left(\text{log}(\alpha_{l_j})\right)^{(k)}(0) \vert \leq T_m$ for all $j$. Then without loss of generality, we can assume that
	\[
	\left(\log(\alpha_{l_j})\right)^{(k)} \, \rightarrow_u \, g_{k}\,\,\,\,\text{as}\,\,\,\,j\,\rightarrow  \,\infty,
	\]
	in $[-N,N]$ for all $N \in \mathbb{N}$ and where the function $g_{k}$  satisfies $g_{k}^{\prime} = g_{k+1}$. Continuing with this process we can assume,  without loss of generality, that	
	\[
	\log(\alpha_{l_j}) \, \rightarrow_u \, g\,\,\,\,\text{as}\,\,\,\,j\,\rightarrow  \,\infty,
	\]
	in $[-N,N]$ for all $N \in \mathbb{N}$ for some function $g$  satisfying $g^{(j)} = g_j$ for all $0 \leq j \leq k+1$. Therefore,
	
	\begin{equation}
	\label{step4_e1}
	\alpha_{l_j} \, \rightarrow_u \, \exp(g)\,\,\,\,\text{as}\,\,\,\,j\,\rightarrow  \,\infty,
	\end{equation}
	in $[-N,N]$ for all $N \in  \mathbb{N}$.
	
	Let us now prove that $\exp(g) \in S_m$. First, we observe by the Fatou's lemma  $e^{g}$  is integrable in $\mathcal{R}$  with respect to the Lebesgue measure. since $S_m $ is tight and, we obtain
	\[
	d\left(\exp(g) , \alpha_{l_j}\right) \rightarrow 0.
	\]
	This  clearly also implies that  $\exp(g)$ integrates to $1$ or $\exp(g) \in \mathcal{P}$. Note that also by Fatou's lemma we have that $J_{k,q}(g) \leq K_m$ and  by construction,	
	\[
	\max\left( \|\exp(g)\|_{\infty}, \|g^{(k+1)}\|_{\infty}, \vert g^{(k)}(0) \vert,\ldots,\vert g(0) \vert  \right) \leq   T_m.
	\]
	Finally, combining all of this with $g^{k+1}$ being $T_m$-Lipschitz, we arrive to $\exp(g) \in S_m$.

	$\bold{Step }$  4	
	By assumption (\ref{as:4}), we have that
	\[
	\underset{\alpha \in  A_m}{\text{sup}}  d\left(f_0,\alpha\right)  \rightarrow  0,\,\,\,\,\text{as}\,\,\,\,\,m \rightarrow \infty.
	\]

	$\bold{Step }$  5
	
	Let us show that	
	\[
	\underset{\epsilon \rightarrow 0}{\text{lim  }} E_{f_0}\left(
	\underset{d(\alpha,\beta)<\epsilon,\beta \in S_m }{\text{sup }}\text{log}\left(\phi*\beta\right)
	\right)  =  E_{f_0}\left( \text{log}\left(\phi*\alpha\right) \right)
	\]
	for all $\alpha \in S_m$. First, note that for all $\xi$
	\[
	0 \leq  \text{max}\left\{0,\underset{d(\alpha,\beta)<\epsilon,\beta \in S_m }{\text{sup }}\text{log}\left(\phi*\beta\right)(\xi) \right\}  \leq \text{max}\left\{0,\text{log}\left(\|\phi \|_{\infty} \right)\right\}.
	\]
	Hence, by Step 1 we obtain
	\[
	\begin{array}{l}
	0 \leq \underset{\epsilon \rightarrow 0}{\text{lim  }} E_{f_0}\left( \text{max}\left\{0,\underset{d(\alpha,\beta)<\epsilon,\beta \in S_m }{\text{sup }}\text{log}\left(\phi*\beta\right) \right\} \right)  =  E_{f_0}\left( \text{max}\left\{\text{log}\left(\phi*\alpha \right),0\right\} \right)\\
	\,\,\, \,\,< \infty.     	
	\end{array}
	\]
	Now  we observe that
	\[
	0  \leq  -\text{min}\left\{0,\underset{d(\alpha,\beta)<\epsilon,\beta \in S_m }{\text{sup }}\text{log}\left(\phi*\beta\right)(\xi)  \right\}  \leq -\text{min}\left\{0,\text{log}\left(\phi*\alpha(\xi)\right)\right\},
	\]
	and the claim follows from the monotone convergence theorem.
	
	If $x_j = 1$ and $\xi_j = y_j$  for all $j =1,\dots,D_n$,  the claim of Theorem \ref{consistency_1} follows from Theorem 1 from \cite{geman:hwang:1982}. Otherwise, we continue the proof below. In either case we can see that  the solution set $M_{m}^{n}$ is not empty given that  the map $\alpha \rightarrow \phi*\alpha(\xi)$ is  continuous with respect to the metric $d$ for any $\xi$.
	
	$\bold{Step }$  6
	
	Note that, by Glivenko-Cantelli Theorem  and our assumption on the maximum number of bins, we have that, almost surely, the random distribution	
	\[
	G_n(\xi) =  \sum_{j=1}^{D_n} \frac{x_j}{n}I_{(-\infty,\xi])}(\xi_j)
	\]
	converges weakly to the distribution function associated with $\phi*f_0$. Hence, almost surely, from the Portmanteau theorem we have  for any $\alpha \in S_{m_{n}}$  and any $\delta > 0$ it holds that
	
	\begin{equation}
	\label{e2}
	\begin{array}{lll}
	\underset{ l \rightarrow  \infty }{\text{lim sup }} \mathlarger\sum\limits_{j=1}^{D_l} \frac{x_{j}}{l} \underset{d(\alpha,\beta)<\delta,\beta \in S_m }{\text{sup }}\text{log}\left(\phi*\beta\right)(\xi_j) &  \leq  &  E_{f_0}\left(\underset{d(\alpha,\beta)<\delta,\beta \in S_m }{\text{sup }}\text{log}\left(\phi*\beta\right)(\xi)\right),
	\end{array}
	\end{equation}
	since the function
	$$
	\xi \,\,\rightarrow\,\,\underset{d(\alpha,\beta)<\delta,\beta \in S_m }{\text{sup }}\text{log}\left(\phi*\beta\right)(\xi),
	$$
	is continuous and bounded by above.
	
	Next we define
	\[
	m_1 = \text{min}\left\{m : \underset{\alpha \in A_m }{ \text{sup }}d\left(\alpha, f_0 \right) < \frac{1}{2}\right\}.
	\]
	Clearly,  $\beta_1,\beta_2 \in A_{m_1}$ implies $d(\beta_1,\beta_2) < 1$. Also, we see that the set $\Pi_1 := \{\alpha \in S_{m_1} :  d\left(\alpha,f_0\right) \geq  1 \} \subset S_{m_1} - A_{m_1}$ is d-compact. Hence, there exists $\alpha_1^1, \ldots, \alpha_{h_1}^1 $ in  $\Pi_1$ such that $\Pi_1 \subset \cup_{l = 1}^{h_1} \{\alpha \in \Pi_{1}:  d\left(\alpha,\alpha_l^1\right) < \delta_{1,l} \}$ for positive constants $\{\delta_{1,l}\}$ satisfying that
	\[
	E_{f_0}\left(\underset{d(\alpha,\alpha_l^1)<\delta_{1,l},\alpha \in \Pi_1 }{\text{sup }}\text{log}\left(\phi*\alpha\right) \right)  <   E_{f_0}\left( \text{log}\left(\phi*f_{m_1}\right) \right)
	\]
	for $l = 1,\ldots, h_1$. Therefore from our assumptions on the sets $A_m$  and also from (\ref{e2}), we arrive at
	\[
	\begin{array}{l}
	\underset{ r \rightarrow  \infty }{\text{lim sup }} \mathlarger\sum\limits_{j=1}^{D_r} \frac{x_{j}}{r} \left(\underset{d(\alpha,\alpha_l^1)<\delta_{1,l},\alpha \in \Pi_1 }{\text{sup }}\log\left(\phi*\alpha\right)(\xi_j) -  \log\left(\phi*f_{m_1}(\xi_j)\right) \right)  \\
	\leq     E_{f_0}\left(\underset{d(\alpha,\alpha_l)<\delta_l,\alpha \in \Pi_1 }{\text{sup }}\log\left(\phi*\alpha\right)\right) -E_{f_0}\left( \log\left(\phi*f_{m_1}\right)  \right) \\
	<   0, \,\,\  \text{a.s}.
	\end{array}
	\]
	Hence
	\[
	\underset{ r \rightarrow  \infty }{\text{lim sup }} \mathlarger\sum\limits_{j=1}^{D_r} x_{j} \left(\underset{d(\alpha,\alpha_l^1)<\delta_{1,l},\alpha \in \Pi_1 }{\text{sup }}\log\left(\phi*\alpha\right)(\xi_j) -  \log\left(\phi*f_{m_1}(\xi_j)\right) \right)  =  -\infty \,\,\  \text{a.s}.
	\]
	This implies
	\[
	\underset{ r \rightarrow  \infty }{\text{lim }}  \frac{   \underset{\alpha \in \Pi_1}{\text{sup}}\prod_{j=1}^{D_r}  \phi*\alpha(\xi_j)^{x_j}   }{\prod_{j=1}^{D_r}  \phi*f_{m_1}(\xi_j)^{x_j}}  =  0 \,\,\, \text{a.s.}
	\]
	Next we define
	\[
	k_1 =  \text{min}\left\{k_0 : r \geq k_0 \,\,\text{implies}\,\,    \frac{   \underset{\alpha \in \Pi_1}{\text{sup}}\prod_{j=1}^{D_r}  \phi*\alpha(\xi_j)^{x_j}   }{\prod_{j=1}^{D_r}  \phi*f_{m_1}(\xi_j)^{x_j}}  < 1 \right\}
	\]
	and we set $m_{k_1} = m_1$.
	Therefore,
	\[
	\underset{\alpha \in M_{m_{k_1}}^{k_1} }{\text{sup } } d\left( \alpha, f_{0}\right)  < 1.
	\]
	Next we define $m_2$ as
	\[
	m_2 = \text{min}\left\{m  \geq m_1 : \underset{\alpha \in A_m }{ \text{sup }}d\left(\alpha, f_0 \right) < \frac{1}{4}\right\}  + 1
	\]
	Then,  $\beta_1,\beta_2 \in A_{m_2}$ implies $d(\beta_1,\beta_2) < 1/2$. We also see that $\Pi_2 := \{\alpha \in S_{m_2} :  d\left(\alpha,f_0\right) \geq  1/2 \} \subset S_{m_2} - A_{m_2}$ is d-compact. Hence there exists $\alpha_1^2, \ldots, \alpha_{h_2}^2 $ in  $\Pi_2$ such that $\Pi_2 \subset \cup_{l = 1}^{h_2} \{\alpha :  d\left(\alpha,\alpha_l^2\right) < \delta_{2,l} \}$ for positive constants $\{\delta_{2,l}\}$ satisfying
	\[
	E_{f_0}\left(\underset{d(\alpha,\alpha_l^2)<\delta_{2,l},\alpha \in \Pi_2 }{\text{sup }}\text{log}\left(\phi*\alpha\right)\right)  <   E_{f_0}\left( \text{log}\left(\phi*f_{m_2}\right) \right)
	\]
	for $l = 1,\ldots, h_2$. Therefore,
	\[
	\begin{array}{l}
	\underset{ r \rightarrow  \infty }{\text{lim sup }} \mathlarger\sum\limits_{j=1}^{D_r} \frac{x_j}{r} \left(\underset{d(\alpha,\alpha_l^2)<\delta_{2,l},\alpha \in \Pi_2 }{\text{sup }}\log\left(\phi*\alpha\right)(\xi_j) -  \log\left(\phi*f_{m_2}(\xi_j)\right) \right)  \\
	\leq     E_{f_0}\left(\underset{d(\alpha,\alpha_l^2)<\delta_{2,l},\alpha \in \Pi_2 }{\text{sup }}\text{log}\left(\phi*\alpha\right)\right) -E_{f_0}\left( \text{log}\left(\phi*f_{m_2}\right)  \right) \\
	<   0 \,\,\  \text{a.s}.
	\end{array}
	\]
	So proceeding as before we obtain
	\[
	\underset{ r \rightarrow  \infty }{\text{lim }}  \frac{   \underset{\alpha \in \Pi_2}{\text{sup}}\prod_{j=1}^{D_r}  \phi*\alpha(\xi_j)^{x_j}   }{\prod_{j=1}^{D_r}  \phi*f_{m_2}(\xi_j)^{x_j}}  =  0 \,\,\, \text{a.s}
	\]
	Finally we define
	\[
	k_2 =  \text{min}\left\{k_0  \, : \,  k_0  \geq k_1\,\,\, \text{and} \,\,r \geq k_0 \,\,\text{implies}\,\,    \frac{   \underset{\alpha \in \Pi_2}{\text{sup}}\prod_{j=1}^{D_r}  \phi*\alpha(\xi_j)^{x_j}   }{\prod_{j=1}^{D_r}  \phi*f_{m_2}(\xi_j)^{x_j}}  < 1 \right\}
	\]
	and we set $m_{k} = m_1 $  for all $ k_1 \leq   k < k_2$ and $m_{k_2} = m_2$.  By construction, we have that
	\[
	\underset{\alpha \in M_{m_{k_2}}^{k_2} }{\text{sup } } d\left( \alpha, f_0\right)  < 1/2.
	\]
	Thus an induction argument allow us to conclude the proof.

\end{proof}

\section{Simulation details}

For the cases of the true mixing density we consider four densities of the form
\[
f_0(\mu)  = \sum_{i=1}^{K} w_i\,N(\mu \mid \theta_i, \sigma_i^2 ).
\]
In all cases considered here, the observations arise as in (\ref{eqn:normal_means_model})  with a standard normal sampling model. In our first example we evaluate performance for samples of a density that has four peaks or explicitly $K=4$, $w = (0.2,0.3,0.3,0.2)$, $\theta = (-3,-1.5,1.5,3)$ and with small variance $\sigma^2 = (0.01,0.01,0.01,0.01)$. For the second example we consider a mixture of three normals two of which are smooth while the other has a peak. The true parameters in this case are $K=3$, $w = (1/3,1/3,1/3)$, $\theta = (0,-2,3)$ and $\sigma^2 = (2,.1,.4)$. The next example is a mixture of $K = 3$ normals,  one of which has very high variance. The true parameters chosen are  $w = (0.3,0.4,0.3)$, $\theta = (0,0,0)$ and $\sigma^2 = (0.1,1,9)$. Our final example is a mixture,  with $K =3$,  giving raise to a very smooth density, the parameters are $w = (0.5,0.4,.1)$, $\theta = (-1.5,1.5,4)$ and $\sigma^2 = (1,2,2)$. A visualization of these examples is shown in Figure   \ref{fig:density_exmples}.

\newpage

\subsection{Sensitivity to the number of bins}

\begin{figure}[t!]
	\includegraphics[width=6.0in,height=3.2in]{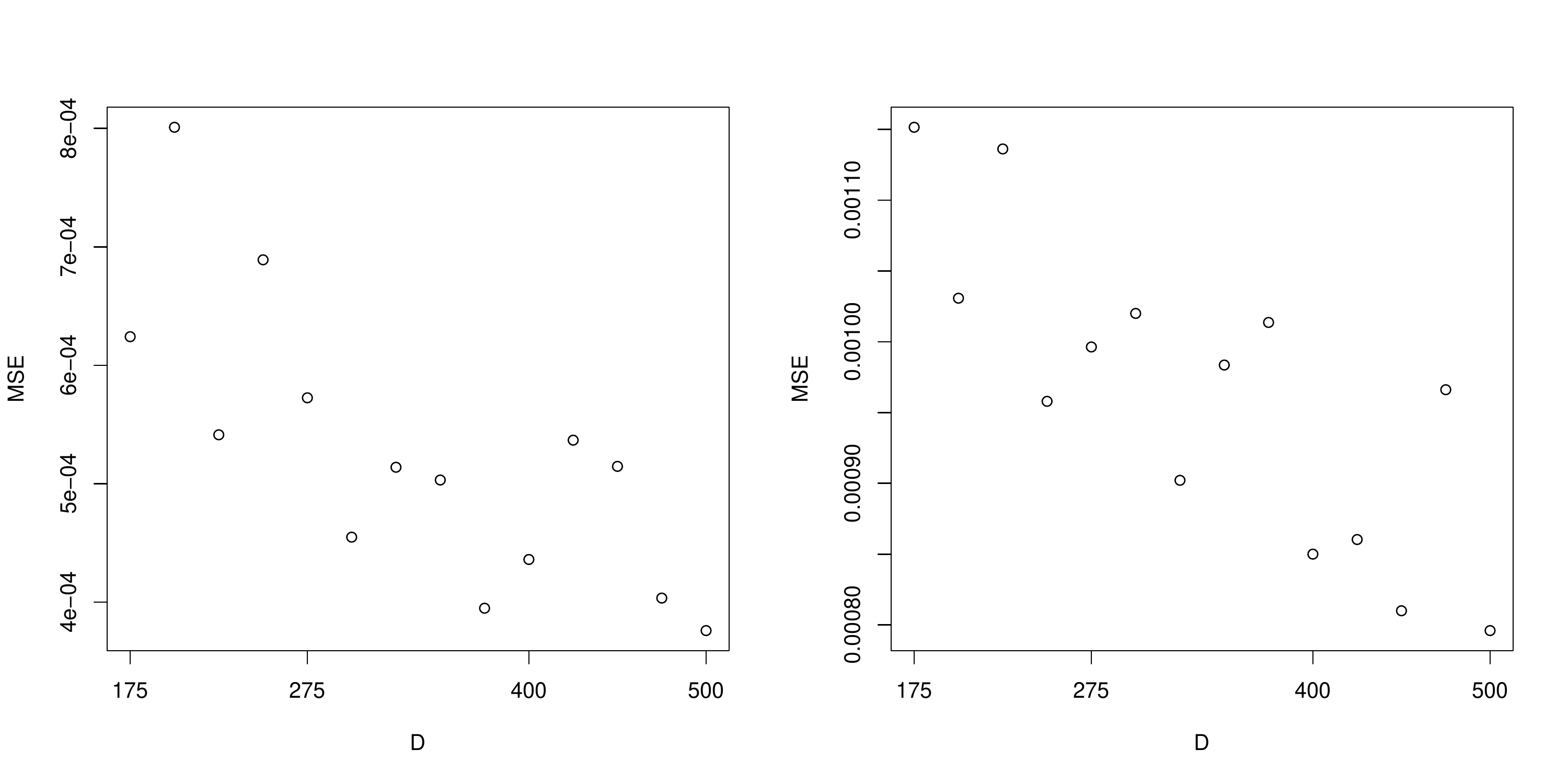}
	\caption{\label{fig:sensitivity_d}Sensitivity of our methods to the parameter $D$. We consider  $n= 10^6$  samples from Example 1  in the paper. The first panel correponds to L2-D  while the second on to L1-D. The results show the average MSE on a $95\%$ mass interval of the true mixing density. The average is taken over 20 MC simulations.}
\end{figure}

Figure \ref{fig:sensitivity_d} shows the performance of both L1-D and L2-D generally improves as we increase $D$. However, based on our experience, $D= 250$ is a reasonable choice. Specially for L1-D whose computational burden increases more rapidly. For $D =250$ it typically takes around $5$ min to compute the solution path for L1-D with 50 values of the regularization parameter. In contrast, L2-D only requires around 40 seconds under the same setting.

\end{spacing}

\end{document}